

\documentclass[a4paper,UKenglish]{lipics-v2021}

\nolinenumbers
\hideLIPIcs


\usepackage{amsmath}%
\usepackage{amssymb}%
\usepackage{graphicx}%
\usepackage{color}%
\usepackage{xspace}%
\usepackage{mleftright}%
\usepackage{xcolor}%
\usepackage{caption}%
\usepackage{stmaryrd}%

\usepackage[algo2e,boxed,linesnumbered,noend]{algorithm2e}
\usepackage[noend]{algpseudocode}

\usepackage{hyperref}%

\numberwithin{figure}{section}%
\numberwithin{table}{section}%
\numberwithin{equation}{section}%

    \newtheorem{problem}[theorem]{Problem}%

\newcommand{\HLinkShort}[2]{\hyperref[#2]{#1\ref*{#2}}}
\newcommand{\HLink}[2]{\hyperref[#2]{#1~\ref*{#2}}}
\newcommand{\HLinkPage}[2]{\hyperref[#2]{#1~\ref*{#2}%
      $_\text{p\pageref{#2}}$}}
\newcommand{\HLinkPageOnly}[1]{\hyperref[#1]{Page~\refpage*{#1}%
      $_\text{p\pageref{#1}}$}}

\newcommand{\HLinkSuffix}[3]{\hyperref[#2]{#1\ref*{#2}{#3}}}
\newcommand{\HLinkPageSuffix}[3]{\hyperref[#2]{#1\ref*{#2}%
      #3$_\text{p\pageref{#2}}$}}

\newcommand{\deflab}[1]{\label{def:#1}}
\newcommand{\defref}[1]{\HLink{Definition}{def:#1}}%
      
\newcommand{\figlab}[1]{\label{fig:#1}}
\newcommand{\figref}[1]{\HLink{Figure}{fig:#1}}

\newcommand{\obslab}[1]{\label{obs:#1}}
\newcommand{\obsref}[1]{\HLink{Observation}{obs:#1}}%

\providecommand{\lemlab}[1]{\label{xlemma:#1}}
\renewcommand{\lemlab}[1]{\label{xlemma:#1}}
\newcommand{\lemref}[1]{\HLink{Lemma}{xlemma:#1}}%

\newcommand{\problab}[1]{\label{problem:#1}}
\newcommand{\probref}[1]{\HLink{Problem}{problem:#1}}%

\newcommand{\corlab}[1]{{\label{cor:#1}}}
\newcommand{\corref}[1]{\HLink{Corollary}{cor:#1}}

\newcommand{\algolab}[1]{{\label{algo:#1}}}
\newcommand{\algoref}[1]{\HLink{Algorithm}{algo:#1}}

\newcommand{\thmlab}[1]{{\label{theo:#1}}}
\newcommand{\thmref}[1]{\HLink{Theorem}{theo:#1}}

\newcommand{\seclab}[1]{\label{sec:#1}}
\newcommand{\secref}[1]{\HLink{Section}{sec:#1}}

\providecommand{\eqlab}[1]{}%
\renewcommand{\eqlab}[1]{\label{equation:#1}}


\renewcommand{\Re}{\mathbb{R}}%
\newcommand{\eps}{{\varepsilon}}%

\newcommand{\CH}{{\mathcal{CH}}}%


\DefineNamedColor{named}{RedViolet} {cmyk}{0.07,0.90,0,0.34}





\newcommand{\myparagraph}[1]{\bigskip\noindent{\textbf{#1}}}

\newcommand{\distX}[2]{||#1-#2||}

\title{Fast and Exact Convex Hull Simplification}

\author{Georgiy Klimenko}{Department of Computer Science; University of Texas at Dallas; Richardson, TX 75080, USA}{gik140030@utdallas.edu}{}{}
\author{Benjamin Raichel}{Department of Computer Science; University of Texas at Dallas; Richardson, TX 75080, USA}{benjamin.raichel@utdallas.edu}{}{}

\authorrunning{G. Klimenko, and B. Raichel}
\Copyright{Georgiy Klimenko, and Benjamin Raichel}

\ccsdesc[100]{Theory of computation $\rightarrow$ Randomness, geometry and discrete structures $\rightarrow$  Computational geometry}

\keywords{Convex hull, coreset, exact algorithm}

\funding{Partially supported by NSF CAREER Award 1750780.}

\acknowledgements{The authors thank Sariel Har-Peled for helpful discussions related to the paper.}

\EventEditors{Miko{\l}aj Boja\'{n}czyk and Chandra Chekuri}
\EventNoEds{2}
\EventLongTitle{41st IARCS Annual Conference on Foundations of Software Technology and Theoretical Computer Science (FSTTCS 2021)}
\EventShortTitle{FSTTCS 2021}
\EventAcronym{FSTTCS}
\EventYear{2021}
\EventDate{December 15--17, 2021}
\EventLocation{Virtual Conference}
\EventLogo{}
\SeriesVolume{213}
\ArticleNo{8}

\begin{document}
\maketitle

\begin{abstract}
 Given a point set $P$ in the plane, we seek a subset  $Q\subseteq P$, whose convex hull gives a smaller and thus simpler representation of the convex hull of $P$. Specifically, let $cost(Q,P)$ denote the Hausdorff distance between the convex hulls $\CH(Q)$ and $\CH(P)$. Then given a value $\eps>0$ we seek the smallest subset $Q\subseteq P$ such that $cost(Q,P)\leq \eps$. We also consider the dual version, where given an integer $k$, we  seek the subset $Q\subseteq P$ which minimizes $cost(Q,P)$, such that $|Q|\leq k$. For these problems, when $P$ is in convex position, we respectively give an $O(n\log^2 n)$ time algorithm and an $O(n\log^3 n)$ time algorithm, where the latter running time holds with high probability.
 When there is no restriction on $P$, we show the problem can be reduced to APSP in an unweighted directed graph, yielding an $O(n^{2.5302})$ time algorithm when minimizing $k$ and an $O(\min\{n^{2.5302}, kn^{2.376}\})$ time algorithm  when minimizing $\eps$, using prior results for APSP. Finally, we show our near linear algorithms for convex position give 2-approximations for the general case.
\end{abstract}

\section{Introduction}

The convex hull of a set of points in the plane is one of the most well studied objects in computational geometry. As the number points on the convex hull can be linear, for example when the points are in convex position, it is natural to seek the best simplification using only $k$ input points. 
To measure the quality of the subset we use one of the most common measures, namely the Hausdorff distance.
Specifically, given a set $P$ of $n$ points in the plane, here we seek the subset of $Q\subseteq P$ of $k$ points which minimizes the Hausdorff distance between $\CH(Q)$ and $\CH(P)$, where $\CH(X)$ denotes the convex hull of $X$.
This is equivalent to finding the subset $Q\subseteq P$ of $k$ points which minimizes $\eps=\max_{p\in P} ||p-\CH(Q)||$. We refer to this as the \emph{min-$\eps$} problem.
We also consider the dual \emph{min-$k$} problem, where given a distance $\eps\geq 0$, we seek the minimum cardinality subset $Q\subseteq P$ such that $\max_{p\in P} ||p-\CH(Q)||\leq \eps$. 
We emphasize that our goal is to find the optimal subset $Q$ exactly. As discussed below, this is a far more demanding problem than allowing approximation in terms of $k$ or $\eps$.

A number of related problems have been considered before, though they all differ in key ways. The three main differences concern the error measure of $Q$, whether $Q$ is restricted to be a subset from $P$, and whether a starting point is given. Varying any one of these aspects can significantly change the hardness of the problem.

\myparagraph{Coresets.}
In this paper, we require our chosen points to be a subset of $P$, which from a representation perspective is desirable as the chosen representatives are actual input data points. 
Such subset problems have thus been extensively studied, and are referred to as coresets (see \cite{handbook3}).
Given a point set $P$, a coreset is subset of $P$ which approximately preserves some geometric property of the input. Thus here we seek a coreset for the Hausdorff distance. 

Among coreset problems, $\eps$-kernels for directional width are one of the most well studied. Define the directional width for a direction $u$ as 
$w(u,P) = \max_{p\in P} \langle u,p\rangle - \min_{p\in P} \langle u,p\rangle$. Then $Q\subseteq P$ is an $\eps$-kernel if for all $u$, $(1-\eps)w(u,P)\leq w(u,Q)$. 
It is known that for any point set $P\subset \Re^d$ there is an $\eps$-kernel of size $O(1/\eps^{(d-1)/2})$ \cite{ahv-aemp-04}. For worst case point sets $\Omega(1/\eps^{(d-1)/2})$ size is necessary, however, for certain inputs, significantly smaller coresets may be possible. (As an extreme example, if the points lie on a line, then the $k=2$ extreme points achieves $\eps=0$ error.)
Thus \cite{bhr-sagps-19} considered computing coresets whose size is measured relative to the optimum for a given input point set. Specifically, if there exists an $\eps$-coreset for Hausdorff distance with $k$ points, then in polynomial time they give an $\eps$-coreset with $O(d k\log k)$ size, or alternatively an $(8\eps^{1/3}+\eps)$-coreset with $O(k/\eps^{2/3})$ size. 
Note that the standard strategy to compute $\eps$-kernels applies a linear transformation to make the point set fat, and then roughly speaking approximates the Hausdorff problem. 
Thus $\eps$-coresets for Hausdorff distance yield $O(\eps)$-kernels (where the constant relates to the John ellipsoid theorem). 
However, $\eps$-kernels do not directly give such coresets for Hausdorff distance, as it depends on the fatness of the point set, i.e.\ Hausdorff is arguably the harder problem.

Most prior work on coresets gave approximate solutions. However, our focus is on exact solutions. Along these lines, a very recent PODS paper \cite{wmlt-mcmrmd-21} considered what they called the \emph{minimum $\eps$-corset} problem, where the goal is to exactly find the minimum sized $\eps$-coreset for a new error measure they introduced. 
Specifically, $Q\subseteq P$ is an $\eps$-coreset for maxima representation if for all directions $u$, $(1-\eps)\omega(u,P)\leq \omega(u,Q)$, where $\omega(u,X) = \max_{x\in X} \langle u,x\rangle $.
While related to our Hausdorff measure, again like directional width, it differs in subtle ways. For example, observe their measure is not translation invariant. Moreover, they assume the input is $\alpha$-fat for some constant $\alpha$, while we do not. 
For their measure they give a cubic time algorithm in the plane, whereas our focus is on significantly subcubic time algorithms.

In the current paper, we select $Q$ so as to minimize the maximum distance of a point in $P$ to $\CH(Q)$. \cite{krv-schc-20} instead considered the problem of selecting $Q$ so as to minimize the sum of the distances of points in $P$ to $\CH(Q)$. They provided near cubic (or higher) running time algorithms for certain generalized versions of this summed coreset variant.

\myparagraph{Other related problems.}
If one relaxes the problem to no longer require $Q$ to be a subset of $P$, then related problems have been studied before. 
Given two convex polygons $X$ and $Y$, where $X$ lies inside $Y$, \cite{abosy-fmcnp-89} provided a near linear time algorithm for the problem of finding the convex polygon $Z$ with the fewest number of vertices such that $X\subseteq Z\subseteq Y$. 
The problem of finding the best approximation under Hausdorff distance has also been considered before. Specifically, if $Q$ can be any subset from $\CH(P)$ (i.e.\ it is not a coreset), then \cite{ar-hacp-05} gave a near linear time algorithm for approximating the convex hull under Hausdorff distance, but under the key assumption that they are given a starting vertex which must be in $Q$. We emphasize that assuming a starting point is given makes a significant difference, and intuitively relates to the difference in hardness between single source shortest paths and all pairs shortest paths. 

A number of papers have considered simplifying polygonal chains. Computing the best global Hausdorff simplification is NP-hard \cite{klw-opshf-20,kklmw-gcs-19}. Most prior work instead considered local simplification, where points from the original chain are assigned to the edge of the simplification whose end points they lie between. 
In general such algorithms take at least quadratic time, with subquadratic algorithms known for certain special or approximate cases. For example, \cite{av-eaapc-00} gave an $O(n^{4/3+\delta})$ time algorithm, for any $\delta>0$, under the $L_1$ metric. Our problem relates to these works in that we must approximate the chain representing the convex hull. On the one hand, convexity gives us additional structure. However, unlike polygonal chain simplification, we do not have a well defined starting point (i.e.\ the convex hull is a closed chain), which as remarked above makes a significant difference in hardness. 

Our problem also relates to polygon approximation, for which prior work often instead considered approximation in relation to area. For example, given a convex polygon $P$, \cite{hklmu-matcpr-20} gave a near linear time algorithm for finding the three vertices of $P$ whose triangle has the maximum area. To illustrate one the many ways that area approximations differ, observe that the area of the triangle of the three given points of $P$ can be determined in constant time, whereas the computing the furthest point from $P$ to the triangle takes linear time.

\myparagraph{Our results.}
We give fast and exact algorithms for both the min-$k$ and min-$\eps$ problems for summarizing the convex hull in the plane. While a number of related problems have been considered before as discussed above, to the best of our knowledge we are the first to consider exact algorithms for this specific version of the problem.

Our main results show that when the input set $P$ is in convex position then the min-$k$ problem can be solved exactly in $O(n\log^2 n)$ deterministic time, and the min-$\eps$ problem can be solved exactly in $O(n \log^3 n)$ time with high probability. 
Note that this version of the problem is equivalent to allowing the points in $P$ to be in arbitrary position, but requiring that the chosen subset $Q$ consist of vertices of the convex hull. (Which follows as the furthest point to $\CH(Q)$ is always a vertex of $\CH(P)$.) Thus this restriction is quite natural, as we are then using vertices of the convex hull to approximate the convex hull, i.e.\ furthering the coreset motivation.

For the general case when $P$ is arbitrary and $Q$ is any subset of $P$, we show that in near quadratic time these problems can be reduced to computing all pairs shortest paths in an unweighted directed graph.  This yields an $O(n^{2.5302})$ time algorithm for the min-$k$ problem and an $O(\min\{n^{2.5302}, kn^{2.376}\})$ time algorithm for the min-$\eps$ problem, by utilizing previous results for APSP in unweighted directed graphs.
Moreover, while exact algorithms are our focus, we show that 
our near linear time algorithms for points in convex position immediately yield 2-approximations for the general case with the same near linear running times. Also, appropriately using single source shortest paths rather than APSP in our graph based algorithms, gives $O(n^2\log n)$ time solutions which use at most one additional point.

\section{Preliminaries}

Given a point set $X$ in $\Re^2$, let $\CH(X)$ denote its convex hull. 
For two points $x,y \in \Re^2$, let $\overline{xy}$ denote their line segment, that is $\overline{xy}=\CH(\{x,y\})$.
Throughout, given points $x,y\in \Re^2$, $||x-y||$ denotes their Euclidean distance.  
Given two compact sets $X,Y\subset \Re^2$, $\distX{X}{Y} = \min_{x\in X, y\in Y} ||x-y||$ denotes their distance. For a single point $x$ we write $\distX{x}{Y} = \distX{\{x\}}{Y}$.

For any two finite point set $Q,P\subset \Re^2$ we define 
\[
cost(Q,P) = \max_{p \in P} ||p-\CH(Q)||
\]
Note that for $Q\subseteq P$, we have that $\CH(Q)\subseteq \CH(P)$, and moreover the furthest point in $\CH(P)$ from $\CH(Q)$ is always a point in $P$. Thus the $cost(Q,P)$ is equivalent to the Hausdorff distance between $\CH(Q)$ and $\CH(P)$. 

In this paper we consider the following two related problems, where for simplicity, we assume that $P$ is in general position.

\begin{problem}[min-$k$]\problab{mink}
Given a set $P \subset \Re^2$ of $n$ points, and a value $\eps>0$, find the smallest integer $k$ such that there exists a subset $Q\subseteq P$ where $|Q|\leq k$ and $cost(Q,P)\leq \eps$.
\end{problem}

\begin{problem}[min-$\eps$]\problab{mineps}
Given a set $P \subset \Re^2$ of $n$ points, and an integer $k$, find the smallest value $\eps$ such that there exists subset $Q\subseteq P$ where $|Q|\leq k$ and $cost(Q,P)\leq \eps$.
\end{problem}
For simplicity the above problems are phrased in terms of finding the value of either $k$ or $\eps$, though we remark that our algorithms for these problems also immediately imply the set $Q$ realizing the value can be determined in the same time. Thus in the following when we refer to a solution to these problems, we interchangeably mean either the value or the set realizing the value.

In the following section we restrict the point set $P$ to lie in convex position, thus for simplicity we define the following convex versions of the above problems.

\begin{problem}[cx-min-$k$]\problab{cxmink}
Given a set $P \subset \Re^2$ of $n$ points in convex position, and a value $\eps>0$, find the smallest integer $k$ such that there exists a subset $Q\subseteq P$ where $|Q|\leq k$ and $cost(Q,P)\leq \eps$.
\end{problem}

\begin{problem}[cx-min-$\eps$]\problab{cxmineps}
Given a set $P \subset \Re^2$ of $n$ points in convex position, and an integer $k$, find the smallest value $\eps$ such that there exists subset $Q\subseteq P$ where $|Q|\leq k$ and $cost(Q,P)\leq \eps$.
\end{problem}

\section{Convex Position}\seclab{convexpos}
In this section we give near linear time algorithms for the case when $P$ is in convex position, that is for \probref{cxmink} and \probref{cxmineps}. First, we need several structural lemmas and definitions. 

\subsection{Structural Properties and Definitions}

\begin{figure}[t]
\centering
\includegraphics[scale=.55]{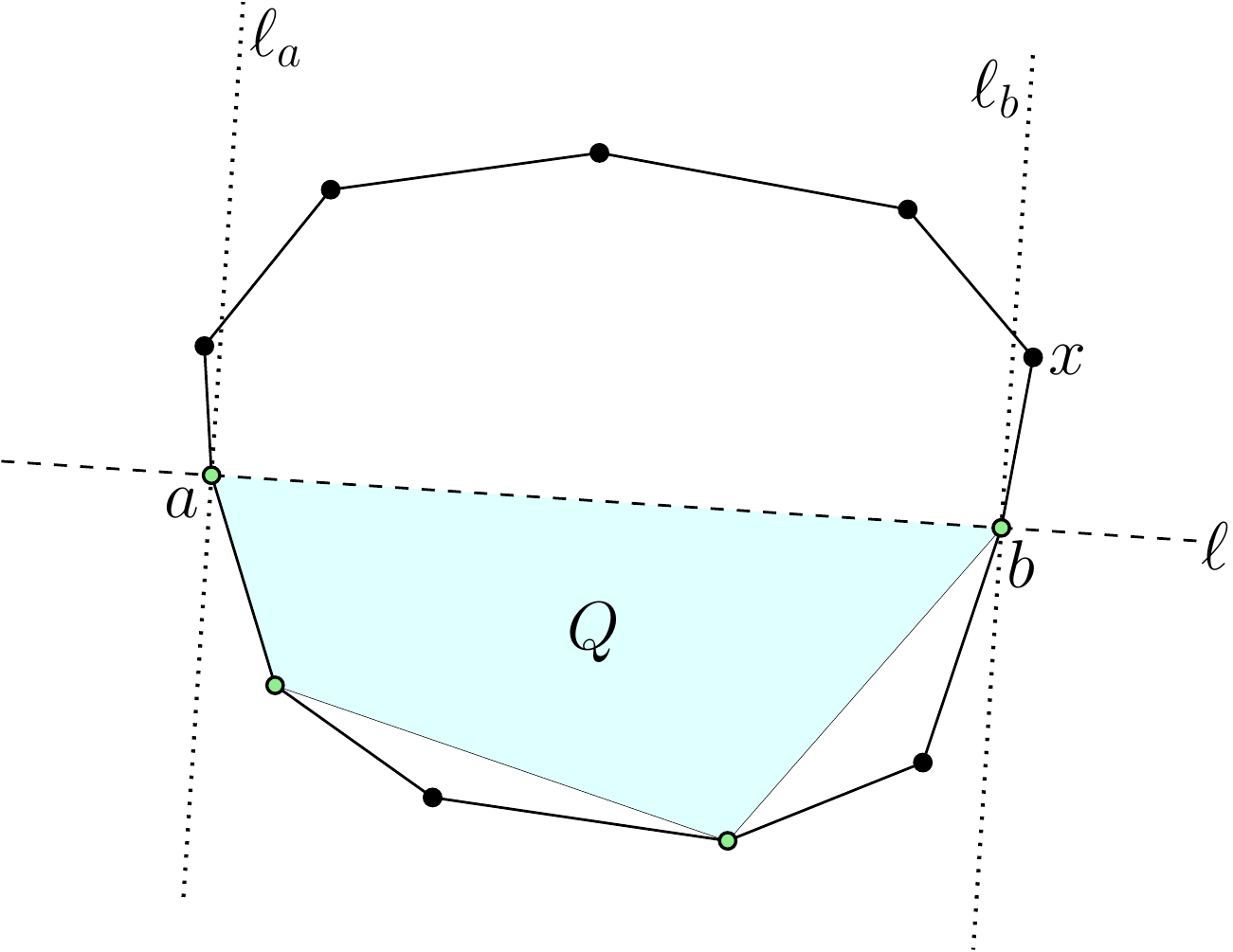}
\caption{An example of the defined objects from \lemref{helper}.}
\figlab{segments}
\end{figure}

\begin{lemma}\lemlab{helper}
Let $P$ be a set of $n$ points in convex position. Consider any subset $Q \subset P$, and let $a,b$ be consecutive in the clockwise ordering of $Q$. 
Then for any point $x\in P$ which falls between $a$ and $b$ in the clockwise ordering of $P$, we have $||x-\CH(Q)|| = ||x-\overline{ab}||$.
\end{lemma}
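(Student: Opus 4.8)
The plan is to reduce the statement to a single geometric claim: the point of $\CH(Q)$ nearest to $x$ always lies on the segment $\overline{ab}$. Since $\overline{ab}\subseteq\CH(Q)$, once this is known we get $\|x-\CH(Q)\|\le\|x-\overline{ab}\|$ automatically, and the nearest point of $\CH(Q)$ being an actual point of $\overline{ab}$ gives the reverse inequality, hence equality. The degenerate case $|Q|=2$ is trivial ($\CH(Q)=\overline{ab}$), so I may assume $|Q|\ge 3$.

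First I would record the setup. Because $a,b$ are consecutive in the clockwise order of $Q$, the segment $\overline{ab}$ is an edge of the convex polygon $\CH(Q)$; hence the line $\ell$ through $a$ and $b$ supports $\CH(Q)$, so $\CH(Q)$ lies in one closed half‑plane $H^{+}$ bounded by $\ell$, and $\ell\cap\CH(Q)=\overline{ab}$ (a face of the polygon). Next, since $P$ is in convex position, $x$, $a$, $b$, and all points of $Q$ are vertices of $\CH(P)$, and $x$ lies on the open arc of $\CH(P)$ between $a$ and $b$ that contains no other point of $Q$; therefore the chord $\overline{ab}$ separates $x$ from every vertex of $Q$ other than $a,b$, so $x\notin H^{+}$ and, by general position, $x\notin\ell$.

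The heart of the proof is to locate the nearest point $y^{\ast}\in\CH(Q)$ to $x$, and I would case on where the foot $x'$ of the perpendicular from $x$ to $\ell$ lands. If $x'\in\overline{ab}$, then $\mathrm{dist}(x,\CH(Q))\ge\mathrm{dist}(x,H^{+})=\|x-x'\|\ge\mathrm{dist}(x,\CH(Q))$ forces $y^{\ast}=x'\in\overline{ab}$. If instead $x'$ lies on $\ell$ beyond $a$ (the "beyond $b$" case being symmetric), I claim $y^{\ast}=a$, which I would establish by checking that $x-a$ lies in the normal cone of $\CH(Q)$ at the vertex $a$: writing $a'$ for the neighbour of $a$ in $Q$ other than $b$, it suffices to verify $\langle x-a,\,b-a\rangle\le 0$ and $\langle x-a,\,a'-a\rangle\le 0$, because $\CH(Q)-a$ lies in the cone spanned by $b-a$ and $a'-a$, so these two inequalities yield $\langle x-a,\,q-a\rangle\le 0$ for all $q\in\CH(Q)$ — exactly the optimality condition for $a$ being the projection of $x$.

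The first inequality is immediate from "$x'$ beyond $a$". The second inequality is the step I expect to be the real obstacle, and it is precisely where convex position of $P$ is indispensable: a priori $a'$ could sit on the far side of $a$ and flip the sign. To rule this out I would invoke convex position on the four vertices $a',a,x,b$ of $\CH(P)$, which appear in this cyclic order, so the chords $\overline{ab}$ and $\overline{a'x}$ have alternating endpoints and therefore cross at an interior point $z\in\overline{ab}$. Placing coordinates with $\ell$ the $x$‑axis, $a$ at the origin, $b$ on the positive $x$‑axis, and $\CH(Q)$ in the upper half‑plane, we have $x=(x_1,x_2)$ with $x_1<0$ and $x_2<0$, and $a'=(a_1',a_2')$ with $a_2'>0$; the facts that $z$ lies on the segment $\overline{a'x}$ and that $z\in\overline{ab}$ has nonnegative $x$‑coordinate then force $a_1'>0$, whence $\langle x-a,\,a'-a\rangle=x_1a_1'+x_2a_2'<0$. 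Combining the cases, $y^{\ast}\in\overline{ab}$ always holds, and the lemma follows.
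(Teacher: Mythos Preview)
Your proof is correct and shares the paper's high-level case split: both argue that when the foot of the perpendicular from $x$ to the line $\ell$ through $a,b$ lands on the segment $\overline{ab}$ the claim is immediate, and then handle the case where the foot lies beyond an endpoint.

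In that harder case the arguments diverge. The paper proceeds by contradiction: assuming the nearest point $z\in\CH(Q)$ is not the endpoint, it produces a point $y\in Q$ such that the endpoint lies inside the triangle $\Delta(a,y,x)$, contradicting convex position of $P$. You instead verify directly that the endpoint (your $a$) is the projection, via the variational inequality $\langle x-a,\,q-a\rangle\le 0$ for all $q\in\CH(Q)$, reduced to the two edge directions $b-a$ and $a'-a$ by the tangent-cone containment $\CH(Q)-a\subseteq\mathrm{cone}(b-a,\,a'-a)$; the nontrivial sign $\langle x-a,\,a'-a\rangle\le 0$ you obtain from the crossing-diagonals property of the convex quadrilateral $a',a,x,b$ in $\CH(P)$. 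Your route is a bit more systematic (it isolates exactly the optimality condition needed and checks it), while the paper's contradiction is more elementary in that it never names normal cones or inner products. Either way, convex position of $P$ enters at the same decisive moment---for you, to force the diagonals $\overline{a'x}$ and $\overline{ab}$ to cross; for the paper, to forbid the endpoint from lying in a triangle of three other points of $P$.
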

\begin{proof}
Let $x$ be any point between $a$ and $b$ in the clockwise ordering of $P$, and let $l$ denote the line through $a$ and $b$.
Since $a$ and $b$ are consecutive in the clockwise order of $Q$, $\CH(Q)$ lies entirely in the closed halfspace defined by $\ell$ and on the opposite side of $\ell$ as $x$. So if $z$ denotes the closest point to $x$ in $\CH(Q)$, then the segment $\overline{xz}$ must intersect $\ell$. 

Consider the lines $l_a$ and $l_b$ which are perpendicular to $l$ and go through $a$ and $b$ respectively. If $x$ lies between $l_a$ and $l_b$, then its projection onto $\ell$ lies on the segment $\overline{ab}$, and hence this is in fact its projection onto $\CH(Q)$, and the claim holds. Otherwise, suppose that $x$ and $a$ are in opposite halfplanes defined by the line $l_b$, see \figref{segments}. (A similar argument will hold when $x$ and $b$ are in opposite halfplanes defined by the line $l_a$.)
Observe, that the closest point in $\ell_b\cap \CH(Q)$ to $x$ is the point $b$, since $x$ is in the opposite halfspace defined by $\ell$ as $\CH(Q)$, and $\ell_b$ is orthogonal to $\ell$. Thus if the shortest path to $z$ intersects $\ell_b$, then it would imply $z=b$, and so again the claim holds. So suppose $z$ and $x$ are on the same side of $\ell_b$. 
Since $\overline{xz}$ intersects $\ell$, $z$ must lie on the opposite side of $\ell$ as $x$. Since $z\in \CH(Q)$, this implies there is a point $y\in Q$ which like $z$ is on the same side of $\ell_b$ as $x$ but on the opposite side of $\ell$ as $x$ (since there is no point of $Q$ on the same side of $\ell$ as $x$).  Thus similarly, the segment $\overline{xy}$ intersects $\ell$, and let $y'$ denote this intersection point. Since $x$ and $y$ are on the same side of $\ell_b$, which is opposite the side of $a$, this implies $b$ lies on the segment $\overline{ay'}$.
As $y'$ lies on the segment $\overline{xy}$, this in turn implies that $b$ lies in the triangle $\Delta(ayx)$.  
This is a contradiction, since $a,y,x,b\in P$, and so $b$ lying in $\Delta(azx)$ implies $P$ is not in convex position. 
\end{proof} 

Assume that the points in $P=\{p_1,\ldots, p_n\}$ are indexed in clockwise order. We now wish to prove a lemma about the optimal cost solution when restricted to points between some index pair $i,j$.  
As we wish our definition to work regardless of whether $i\leq j$ or $j\leq i$, we define the following notation. For a triple of indices $i,x,j$, we write $i\preceq x \preceq j$ to denote that $p_x$ falls between $p_i$ and $p_j$ in the clockwise ordering. More precisely, if $i\leq j$ then this means $i\leq x\leq j$, and if $j\leq i$ then this means that $j\leq x \leq n$ or $1\leq x\leq i$.

\begin{definition}\deflab{costg}
For any integer $0\leq k\leq n-2$ we define 
\[
cost_k(i,j) = \min_{i\preceq l_1\preceq \ldots \preceq l_k \preceq j} ~~\max_{i \preceq v \preceq j} ||p_v-\CH(p_i, p_{l_1}, \ldots, p_{l_k}, p_j)||.
\]
\end{definition}
That is, $cost_k(i,j)$ is the minimum cost solution when restricted to including $p_i$, $p_j$, and $k$ other vertices in clockwise order between $p_i$ and $p_j$, and where we only evaluate the cost with respect to points in clockwise order between $p_i$ and $p_j$.  

According to the above definition, we have that 
$cost_0(i,j)=\max_{i \preceq v \preceq j} ||p_v-\CH(p_i, p_j)|| = \max_{i \preceq v \preceq j} ||p_v-\overline{p_ip_j}||$. Observe that the following is implied by \lemref{helper}.

\begin{corollary}\corlab{decompose}
Let $Q=\{p_{l_1},\ldots,p_{l_k}\}\subseteq P$ be indexed in clockwise order, and let $l_{k+1}=l_1$. Then,
\[
cost(Q,P) = \max_{p \in P} ||p-\CH(Q)||
=\max_{1\leq i\leq k}~~ \max_{l_i\preceq j\preceq l_{i+1}} ||p_j-\overline{p_{l_i}p_{l_{i+1}}}|| =\max_{1\leq i\leq k} cost_0(l_i,l_{i+1}).
\]
\end{corollary}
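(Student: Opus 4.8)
The plan is to exploit the fact that the vertices of $Q$, taken in clockwise order, partition the clockwise cyclic order of $P$ into the consecutive blocks $B_i = \{p_j : l_i \preceq j \preceq l_{i+1}\}$, for $i=1,\ldots,k$ (with the convention $l_{k+1}=l_1$), where adjacent blocks $B_i$ and $B_{i+1}$ overlap only in the shared vertex $p_{l_{i+1}}$ of $Q$, and $\bigcup_i B_i = P$. In particular, every $p\in P$ lies in some block $B_i$, i.e.\ it falls between $p_{l_i}$ and $p_{l_{i+1}}$ in the clockwise order of $P$, and $p_{l_i}, p_{l_{i+1}}$ are consecutive in the clockwise order of $Q$, so \lemref{helper} applies.

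For the first nontrivial equality I would prove both inequalities. For ``$\le$'': given $p\in P$, choose a block $B_i$ containing it; applying \lemref{helper} with $a=p_{l_i}$, $b=p_{l_{i+1}}$, $x=p$ gives $\|p-\CH(Q)\| = \|p-\overline{p_{l_i}p_{l_{i+1}}}\|$, which is one of the terms appearing in the right-hand double maximum. (If $p$ happens to be a vertex of $Q$, both quantities are $0$ and the identity is trivial.) For ``$\ge$'': for each $i$ and each $j$ with $l_i \preceq j \preceq l_{i+1}$, the same application of \lemref{helper} yields $\|p_j-\overline{p_{l_i}p_{l_{i+1}}}\| = \|p_j-\CH(Q)\| \le \max_{p\in P}\|p-\CH(Q)\|$, and taking the maximum over all such $i,j$ gives the bound. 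Hence the middle equality holds. The last equality is then immediate from \defref{costg}, since $cost_0(l_i,l_{i+1}) = \max_{l_i \preceq j \preceq l_{i+1}} \|p_j-\overline{p_{l_i}p_{l_{i+1}}}\|$, and the first equality $cost(Q,P)=\max_{p\in P}\|p-\CH(Q)\|$ is just the definition of $cost$.

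I do not expect a genuine obstacle: the only points needing care are the wrap-around in the cyclic indexing (absorbed into the $\preceq$ notation and the convention $l_{k+1}=l_1$) and the fact that each vertex of $Q$ belongs to two adjacent blocks (harmless, since its distance to $\CH(Q)$ and to either incident segment is $0$). The degenerate small cases (e.g.\ $|Q|\le 2$) can be verified directly and pose no difficulty.
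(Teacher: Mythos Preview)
Your proposal is correct and matches the paper's approach: the paper simply states that the corollary is implied by \lemref{helper}, and your argument spells out exactly that implication via the block decomposition of $P$ induced by consecutive vertices of $Q$. There is nothing more to add.
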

For more general values of $k$, the following lemma will be used to argue we can use a greedy algorithm.

\begin{lemma}\lemlab{ranges}
For any indices $i'\preceq i\preceq j \preceq j'$, it holds that $cost_k(i,j) \leq cost_k(i', j')$.
\end{lemma}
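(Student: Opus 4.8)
The plan is to show that any feasible configuration witnessing $cost_k(i',j')$ can be transformed into a feasible configuration for $cost_k(i,j)$ of no greater cost. Let $i' \preceq l_1 \preceq \cdots \preceq l_k \preceq j'$ be the clockwise sequence of $k$ interior indices achieving the minimum in the definition of $cost_k(i',j')$, and let $Q' = \{p_{i'}, p_{l_1}, \ldots, p_{l_k}, p_{j'}\}$. First I would observe that since $i' \preceq i \preceq j \preceq j'$, every index $v$ with $i \preceq v \preceq j$ also satisfies $i' \preceq v \preceq j'$; hence the ``test points'' for $cost_k(i,j)$ are a subset of those for $cost_k(i',j')$. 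So it suffices to build a candidate set $Q$ of the right form (containing $p_i$, $p_j$, and $k$ interior points in clockwise order between them) such that for every $v$ with $i \preceq v \preceq j$ we have $\|p_v - \CH(Q)\| \le \|p_v - \CH(Q')\|$.

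The natural choice is to take the interior indices of $Q'$ that already lie clockwise between $i$ and $j$, and ``pad'' with copies of $p_i$ or $p_j$ for the ones that fall outside the range $[i,j]$ — formally, define $m_t = \mathrm{med}(i, l_t, j)$ in the clockwise sense (i.e.\ clamp $l_t$ into the arc from $p_i$ to $p_j$), so that $i \preceq m_1 \preceq \cdots \preceq m_k \preceq j$ and $Q = \{p_i, p_{m_1}, \ldots, p_{m_k}, p_j\}$. Clamping preserves the clockwise order because taking the median with the fixed endpoints $i,j$ is monotone, so $Q$ is a legal configuration for $cost_k(i,j)$. The key claim is then the containment $\CH(Q') \cap H \subseteq \CH(Q)$, where $H$ is the region ``cut off'' appropriately — more precisely, I would argue directly that $\CH(\{p_i,p_j\} \cup \{p_{l_t} : i \preceq l_t \preceq j\}) \subseteq \CH(Q)$, since $Q$ contains all of these points, and that for the test points $p_v$ with $i\preceq v\preceq j$, the nearest point of $\CH(Q')$ already lies in this sub-hull. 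That last fact is exactly the content of \lemref{helper} / \corref{decompose}: for such a $v$, if $a,b$ are the $Q'$-vertices consecutive around $v$ in clockwise order, then $\|p_v - \CH(Q')\| = \|p_v - \overline{ab}\|$, and because $i \preceq v \preceq j$, the pair $a,b$ can be chosen among $\{p_i, p_j\} \cup \{p_{l_t} : i \preceq l_t \preceq j\} \subseteq Q$ (replacing an out-of-range neighbor by $p_i$ or $p_j$ only moves the relevant segment inside $\CH(Q)$, which does not increase the distance). Chaining these inequalities gives $\|p_v - \CH(Q)\| \le \|p_v - \overline{ab}\| = \|p_v - \CH(Q')\|$ for all in-range $v$, hence $cost_k(i,j) \le \max_{i\preceq v\preceq j}\|p_v-\CH(Q)\| \le \max_{i\preceq v\preceq j}\|p_v-\CH(Q')\| \le cost_k(i',j')$.

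The main obstacle I anticipate is the clamping/neighbor-replacement argument: one has to check carefully that when $p_v$'s clockwise-consecutive $Q'$-neighbor $a$ lies outside the arc $[p_i,p_j]$ (so that after clamping it becomes $p_i$), the distance $\|p_v - \overline{p_i b}\|$ or $\|p_v - \CH(Q)\|$ is still no larger than $\|p_v - \overline{ab}\|$. Here convex position of $P$ is essential: it forces $p_i$ to lie on the arc of $\CH(P)$ between $a$ and $p_v$, so the segment $\overline{p_i b}$ (together with $p_v$ being separated from $\CH(Q)$ by the line through its consecutive neighbors, as in \lemref{helper}) keeps $p_v$ on the far side and only weakly closer. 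I would handle this by invoking \lemref{helper} once more for the configuration $Q$ itself, reducing $\|p_v-\CH(Q)\|$ to a distance to a single edge of $\CH(Q)$, and then comparing the two edges directly via an elementary planar argument. Everything else is bookkeeping about the $\preceq$ order.
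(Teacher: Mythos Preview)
Your proposal is correct and lands on essentially the same candidate set and the same key tool (\lemref{helper}) as the paper, but the paper's execution is cleaner and avoids precisely the ``main obstacle'' you flag. Instead of clamping the $l_t$'s and then having to compare $\|p_v-\overline{p_i b}\|$ with $\|p_v-\overline{ab}\|$ when a neighbor $a$ falls outside $[i,j]$, the paper first \emph{adds} $p_i$ and $p_j$ to the optimal configuration $Q'=\{p_{i'},p_{l_1},\dots,p_{l_k},p_{j'}\}$. Enlarging the hull can only decrease every distance, so
\[
\max_{i'\preceq v\preceq j'}\|p_v-\CH(Q')\|\ \ge\ \max_{i\preceq v\preceq j}\|p_v-\CH(Q'\cup\{p_i,p_j\})\|.
\]
Now, for any $v$ with $i\preceq v\preceq j$, the two clockwise-consecutive vertices of $Q'\cup\{p_i,p_j\}$ around $p_v$ are automatically in the arc $[p_i,p_j]$ (because $p_i$ and $p_j$ are in the set), so a single invocation of \lemref{helper} gives
\[
\|p_v-\CH(Q'\cup\{p_i,p_j\})\|\ =\ \|p_v-\overline{a'b'}\|\ \ge\ \|p_v-\CH(p_i,p_{l_x},\dots,p_{l_y},p_j)\|,
\]
where $p_{l_x},\dots,p_{l_y}$ are exactly the $l_t$'s lying in $[i,j]$. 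Since there are at most $k$ of them, this last quantity is $\ge cost_k(i,j)$, and the chain of inequalities is complete.

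Your clamped set $\{p_{m_t}\}$ has the same convex hull as $\{p_i,p_{l_x},\dots,p_{l_y},p_j\}$, so you are building the same witness; the difference is only in how the distance comparison is justified. The ``add $p_i,p_j$ first'' trick is worth internalizing: it turns your delicate neighbor-replacement case analysis into a one-line monotonicity step.
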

\begin{proof}
Let $p_{i'}, p_{l_1}, \ldots p_{l_k}, p_{j'}$ be the clockwise chain of vertices that realizes $cost_k(i', j')$. 
That is, $cost_k(i', j')=max_{i' \preceq v \preceq j'} ||p_v-\CH(p_{i'}, p_{l_1}, \ldots p_{l_k}, p_{j'})||$. 
Observe that if we add points to this chain then we can only decrease the cost. Specifically, we consider adding the points $p_i$ and $p_j$. So let $p_{l_x},\ldots, p_{l_y}$ be the subchain of $p_{l_1}, \ldots p_{l_k}$ consisting of all $i\preceq l_i\preceq j$. Then we have, 
\begin{align*}
cost_k(i', j')
&=max_{i' \preceq v \preceq j'} ||p_v-\CH(p_{i'}, p_{l_1}, \ldots, p_{l_k}, p_{j'})||\\
&\geq max_{i' \preceq v \preceq j'} ||p_v-\CH(p_{i'}, p_{l_1},\ldots, p_i, p_x, \ldots, p_y, p_j, \ldots, p_{l_k}, p_{j'})||\\
&\geq max_{i \preceq v \preceq j} ||p_v-\CH(p_{i'}, p_{l_1},\ldots, p_i, p_x, \ldots, p_y, p_j, \ldots, p_{l_k}, p_{j'})||\\
&\geq max_{i \preceq v \preceq j} ||p_v-\CH(p_i, p_x, \ldots, p_y, p_j)||
\geq cost_k(i,j).
\end{align*}
The second to last inequality holds by \lemref{helper}.
The last inequality holds as the chain $p_x, \ldots, p_y$ has at most $k$ points (since it was a subchain of $p_{l_1},\ldots, p_{l_k}$) and $cost_k(i,j)$ is defined by the minimum cost such chain between $i$ and $j$. 
\end{proof}

We now define the notions of friends and greedy sequences, which we use in the next section to design our greedy algorithm.

\begin{definition}\deflab{friends}
 For an index $i$ and value $\eps\geq 0$, define the $\eps$-friend of $i$, denoted $f_\eps(i)$, as the index $j$ of the vertex furthest from $p_i$ in the clockwise ordering of $P$, such that $cost_0(i,j)\leq \eps$.
\end{definition}
Note that $f_\eps(i)$ is always well defined. In particular, $cost_0(i,i+1)=0$ for any $i$. 
Moreover, if the ball of radius $\eps$ centered at $p_i$ contains $P$ then $f_\eps(i)=i$, and the point $p_i$ by itself is an optimal solution to \probref{cxmink}. 
Note that we can easily determine if such a point exists in $O(n\log n)$ time by computing the farthest Voronoi diagram of $P$,%
\footnote{
The farthest Voronoi diagram of $P$ partitions the plane into regions sharing the same farthest point in $P$. It allows one to find the farthest point in $P$ from a query in logarithmic time.
See for example \cite{grt-hdcg-04}.
}
and then querying all points in $P$. For simplicity we will assume $f_\eps(i)\neq i$, which can thus be assured by such a preprocessing step.

\begin{definition}\deflab{greedyseq}
Let $Q=\{p_{l_1},p_{l_2},\ldots,p_{l_k}\}$ be any subset of $P$, which we assume has been indexed such that $l_1<l_2<\ldots<l_k$. We call $Q$ a \emph{greedy sequence} if for all $1\leq i< k$, we have $f_\eps(l_i)=l_{i+1}$, and $f_\eps(l_k)<l_k$. We call a greedy sequence \emph{valid} if $f_\eps(l_k)\geq l_1$. 
\end{definition}

Note that in the above definition, the condition that $f_\eps(l_k)<l_k$ ensures that the $\eps$-friend of $p_{l_k}$ goes past the vertex $p_n$, i.e.\ this ensure that the sequence is a maximal sequence without wrapping around.
Note also there always exists a valid greedy sequence. Specifically, we trivially have that for any greedy sequence $f_\eps(l_k)\geq 1$. Thus the greedy sequence starting at $p_1$ is valid as in that case $l_1 = 1$.

\begin{observation}\obslab{greedyvalid}
 Let $Q=\{p_{l_1},p_{l_2},\ldots,p_{l_k}\}$ be a valid greedy sequence. Then since $Q$ is a greedy sequence $cost_0(l_i,l_{i+1})\leq \eps$ for all $1\leq i< k$. Furthermore,  $cost_0(l_k,l_{1})\leq cost_0(l_k,f_\eps(l_k))\leq \eps$ by \lemref{ranges} and since $Q$ is valid. Thus by \corref{decompose}, $cost(Q,P)\leq \eps$.
\end{observation}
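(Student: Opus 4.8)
The plan is to reduce $cost(Q,P)$ to a maximum of $k$ single-arc costs $cost_0(l_i,l_{i+1})$ via \corref{decompose}, and then bound each such cost by $\eps$. First I would note that since $P$ is indexed clockwise and $Q=\{p_{l_1},\dots,p_{l_k}\}$ is indexed with $l_1<\dots<l_k$, the set $Q$ is in clockwise order, so \corref{decompose} applies (with $l_{k+1}=l_1$) and gives
\[
cost(Q,P)=\max_{1\le i\le k} cost_0(l_i,l_{i+1}).
\]
It therefore suffices to show $cost_0(l_i,l_{i+1})\le\eps$ for each $i$.

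For the non-wrapping terms $1\le i<k$ this is immediate from the definitions: being a greedy sequence (\defref{greedyseq}) forces $l_{i+1}=f_\eps(l_i)$, and by \defref{friends} we have $cost_0(l_i,f_\eps(l_i))\le\eps$, so $cost_0(l_i,l_{i+1})\le\eps$. The one remaining term is the wrap-around term $cost_0(l_k,l_1)$. Here the greedy-sequence condition gives $f_\eps(l_k)<l_k$ while validity gives $f_\eps(l_k)\ge l_1$; unwinding the $\preceq$ notation, this places $l_1$ on the clockwise arc from $p_{l_k}$ to $p_{f_\eps(l_k)}$ (the arc through $p_n$), i.e.\ $l_k\preceq l_k\preceq l_1\preceq f_\eps(l_k)$. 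Then \lemref{ranges} (with $k=0$) yields $cost_0(l_k,l_1)\le cost_0(l_k,f_\eps(l_k))$, which is $\le\eps$ by \defref{friends} once more. Feeding all $k$ bounds back into the displayed identity gives $cost(Q,P)\le\eps$.

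The only step that needs genuine care is the cyclic bookkeeping in the wrap-around case: checking that $f_\eps(l_k)<l_k$ together with $f_\eps(l_k)\ge l_1$ really does put $l_1$ clockwise-between $l_k$ and $f_\eps(l_k)$, so that \lemref{ranges} may be invoked on the pair $(l_k,l_1)$ nested inside $(l_k,f_\eps(l_k))$. Everything else is a direct substitution into \corref{decompose}, \lemref{ranges}, and \defref{friends}, so I do not expect any real obstacle; the substantive work has effectively been front-loaded into those three results.
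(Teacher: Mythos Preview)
Your proposal is correct and follows exactly the same argument that the paper embeds in the observation itself: decompose $cost(Q,P)$ via \corref{decompose}, bound the non-wrapping arcs directly from \defref{greedyseq} and \defref{friends}, and bound the wrap-around arc using validity together with \lemref{ranges}. The only addition is that you spell out the cyclic $\preceq$ bookkeeping for the wrap-around case, which the paper leaves implicit.
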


\begin{lemma}\lemlab{greedyopt}
 Let $P,\eps$ be an instance \probref{cxmink}. Any valid greedy sequence of minimum possible cardinality is an optimal solution to the given instance.
\end{lemma}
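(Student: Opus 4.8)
The plan is to show that the minimum cardinality over all valid greedy sequences equals $t$, the size of an optimal feasible subset for the instance $(P,\eps)$; the statement then follows, since such a minimizer is by construction a feasible subset of that size. One direction is immediate: by \obsref{greedyvalid}, any valid greedy sequence $Q$ satisfies $cost(Q,P)\le\eps$, so it is feasible and $|Q|\ge t$. The real content is to exhibit a valid greedy sequence with at most $t$ vertices.

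First I would record two structural facts. Listing a subset clockwise as $p_{l_1},\ldots,p_{l_k}$ (cyclically, $l_{k+1}=l_1$), \corref{decompose} gives $cost=\max_i cost_0(l_i,l_{i+1})$, while the $k=0$ case of \lemref{ranges} with the left endpoint fixed shows that $cost_0(i,\cdot)$ is nondecreasing as the right endpoint sweeps clockwise away from $p_i$; hence the subset is feasible if and only if $l_{i+1}\preceq f_\eps(l_i)$ for all $i$. Thus feasibility is a circular cover condition phrased through the friend map, and a greedy sequence is exactly a chain obtained by always jumping to the $\eps$-friend. Second, the friend map is ``monotone'': if $p_x$ precedes $p_y$ clockwise within a single turn, then $f_\eps(y)$ is clockwise at least as far as $f_\eps(x)$. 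This follows from \lemref{ranges} again: either $p_y$ lies on the clockwise arc from $p_x$ to $p_{f_\eps(x)}$, so $cost_0(y,f_\eps(x))\le cost_0(x,f_\eps(x))\le\eps$ and $f_\eps(y)\succeq f_\eps(x)$; or $p_y$ is already past $p_{f_\eps(x)}$, and then $f_\eps(y)\succeq y\succ f_\eps(x)$.

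Now let $Q^*=\{p_{a_1},\ldots,p_{a_t}\}$ be an optimal subset, indexed clockwise starting from its smallest index $a_1$, and run the greedy process from $a_1$: set $c_1=a_1$ and $c_{i+1}=f_\eps(c_i)$, stopping when the index first decreases, giving a greedy sequence $G=\{p_{c_1},\ldots,p_{c_m}\}$. Using the feasibility criterion for $Q^*$ (so $a_{i+1}\preceq f_\eps(a_i)$) together with the monotonicity of the friend map, an induction shows that $c_i$ is clockwise at least as far from $a_1$ as $a_i$, for every $i$ up to the point where one of the two chains wraps past $p_n$. From this $m\le t$: if greedy had not wrapped by step $t+1$, then $c_{t+1}$ would already be at or beyond $a_1$, contradicting that $c_1<\cdots<c_{t+1}$ are strictly increasing; and when $m=t$ the final jump $f_\eps(c_t)\succeq f_\eps(a_t)$ wraps around to at least $a_1=c_1$, which is precisely the validity condition. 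Being feasible (\obsref{greedyvalid}) with at most $t$ vertices, $G$ is optimal, hence so is every valid greedy sequence of minimum cardinality.

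I expect the main obstacle to be exactly the cyclic bookkeeping suppressed above. One must verify that iterating $f_\eps$ from the chosen anchor really produces a chain of the shape demanded of a valid greedy sequence — strictly increasing indices followed by a single wrap past $p_n$ that lands at or beyond the start — and one must control the case where the greedy chain wraps ``early'' (so that $m<t$ and $G$ need not close up). Handling this cleanly likely requires either a careful choice of anchor among the vertices of $Q^*$, or falling back on the chain anchored at $p_1$, which is always valid, while re-running the same ``greedy stays ahead'' induction to keep its length at most $t$; in either case the step where a chain crosses $p_n$ is the delicate one. The feasibility criterion and the monotonicity of $f_\eps$ are the real engine, and the remainder is bounded case analysis on where the two chains sit relative to $p_n$.
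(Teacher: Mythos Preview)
Your approach is genuinely different from the paper's and, modulo one point, cleaner. The paper does not run a ``greedy stays ahead'' comparison; instead it takes an optimal $Q=\{p_{l_1},\dots,p_{l_k}\}$ and repeatedly replaces the suffix from the first non-greedy position $l_j$ by $f_\eps(l_{j-1})$ together with a realizer of $cost_{k-j}(f_\eps(l_{j-1}),l_1)$, using \lemref{ranges} on $cost_{k-j}$ to keep feasibility and size. When the greedy step finally wraps past $p_n$ to some $h<l_1$, it restarts the whole transformation with $l_1\leftarrow h$; since $l_1$ strictly decreases this terminates, at the latest when $l_1=1$. Your two structural facts (feasibility $\Leftrightarrow l_{i+1}\preceq f_\eps(l_i)$, and monotonicity of $f_\eps$) are correct and are exactly what drive the stays-ahead induction; that route avoids the paper's use of $cost_k$ for $k>0$.

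The gap is precisely the early-wrap case you flag, and neither of your suggested patches closes it as written. If greedy from $a_1$ stops at $m<t$ with $f_\eps(c_m)=h<a_1$, the sequence is invalid, and nothing so far rules this out (concretely: one can have $a_1=10$, $c_2=f_\eps(10)=60$, $f_\eps(60)=5$). ``Falling back on the chain anchored at $p_1$'' does give a valid greedy sequence, but you cannot ``re-run the same induction'' to bound its length by $t$: with $d_1=1\le a_1$, monotonicity gives $d_2=f_\eps(1)\preceq f_\eps(a_1)$, so greedy from $p_1$ is \emph{behind} $Q^*$, not ahead, and the length bound does not follow. The clean repair is a one-line normalization you stop just short of: the clockwise labeling of $P$ is arbitrary, so relabel so that some vertex of $Q^*$ is $p_1$, i.e.\ $a_1=1$. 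Then your stays-ahead induction from $c_1=a_1=1$ gives $m\le t$ exactly as you argue, and validity $f_\eps(c_m)\ge 1=c_1$ is automatic for every $m$, so the early-wrap case disappears. This is, in effect, what the paper's ``decrease $l_1$ until $l_1=1$'' loop accomplishes.
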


\begin{proof}
Let $Q=\{p_{l_1},p_{l_2},\ldots,p_{l_{k}}\}$ be an optimal solution to \probref{cxmink}, indexed such that $1 \leq l_1 < l_2 < \ldots < l_{k}$. Thus $cost(Q,P)\leq \eps$ and so by \corref{decompose}, $\max_{1\leq i\leq k} cost_0(l_i,l_{i+1}) \allowbreak \leq \eps$, where $l_{k+1}=l_1$. Thus if $Q$ is a greedy sequence then it is a valid greedy sequence, and the claim holds.
So suppose $Q$ is not a greedy sequence. Now we show that $Q$ can be converted to a valid greedy sequence with the same cardinality.

Let $j>1$ be the first index such that $l_j \neq f_\eps(l_{j-1})$. Let $w_j= f_\eps(l_{j-1})$ and let $\{w_{j+1}, w_{j+2}, \ldots, w_{k}\}$ be the indices which realize $cost_{k-j}(w_j, l_1)$ according to \defref{costg}. Then we modify $Q$ by replacing the suffix $\{p_{l_j},p_{l_{j+1}}, \ldots, p_{l_{k}}\}$ with $\{p_{w_j},p_{w_{j+1}}, \ldots, p_{w_{k}}\}$.
Notice that the cost of $Q$ after this modification is still $\leq \eps$ because $cost_0(l_{j-1}, w_j) \leq \eps$ as $w_j= f_\eps(l_{j-1})$, and by \lemref{ranges} we have $cost_{k-j}(w_j,l_1) \leq cost_{k-j}(l_j,l_1)$.
Now repeat this procedure until $h = f_\eps(l_{j-1})$ goes beyond index $n$. Let the resulting new optimal solution be denoted $Q'$. If $h \geq l_1$, then $Q'$ is a valid greedy sequence by our construction, and we are done. So if the sequence failed to be a valid greedy sequence, then $1 \leq h < l_1$. Thus we can repeat the whole procedure, relabeling vertices of $Q'$ such that $l_1=h$. This means that each time we repeat this procedure we either produce a valid greedy sequence or we decrease $l_1$. 
At some point $l_1=1$, at which time the procedure must produce a valid greedy sequence as in this case $h\geq 1=l_1$.

The above argues that some valid greedy sequence of minimum cardinality is optimal. Note this implies all valid greedy sequences of minimum cardinality are optimal, since they all have the same size, and by 
\obsref{greedyvalid} their cost is $\leq \eps$.
\end{proof}

\subsection{The min-$k$ Algorithm}
In this section we give an efficient algorithm for \probref{cxmink}.
The idea is to use the $f_\eps(i)$ values to define a graph. Specifically, the \emph{friend graph} $G_f$ is the directed graph with vertex set $P$ where there is an edge from $p_i$ to $p_j$ if and only if $f_\eps(i)=j$ and $i<j$. Thus every vertex in $G_f$ has outdegree at most $1$. Moreover, $G_f$ is acyclic since we only created edges from lower index vertices to higher index ones. These two properties together imply that $G_f$ is a forest, where each sink vertex defines the root of a tree. Thus every vertex in $G_f$ has a well defined depth, where sink vertices have depth one.

Let $Q=\{p_{l_1},p_{l_2},\ldots,p_{l_k}\}$ be a greedy sequence, as defined in \defref{greedyseq}. Then observe that for all $1\leq i<k$, $p_{l_i}p_{l_{i+1}}$ is an edge of $G_f$, and hence $Q$ corresponds to a path in $G_f$. Moreover, the condition that $f_\eps(l_k)<l_k$ in \defref{greedyseq} implies that $p_{l_k}$ is a sink vertex in $G_f$, and hence $Q$ corresponds to a path in $G_f$ from the vertex $p_{l_1}$ to the root of its corresponding tree. Conversely, for the same reasons if we are given a path $p_{l_1},p_{l_2},\ldots,p_{l_k}$ in $G_f$ where $p_{l_k}$ is a sink, then this path is a greedy sequence. That is, the set of paths ending in sinks in $G_f$ and the set of greedy sequences are in one-to-one correspondence. 

Thus given all the $f_\eps(i)$ values have been precomputed, this suggests a simple linear time algorithm to compute a valid greedy sequence $Q$ with the fewest number of points, which by \lemref{greedyopt} is an optimal solution to the given instance of \probref{cxmink}.
Specifically, find all pairs $(p_i,p_r)$ where $p_i\in P$ and $p_r$ is the root of the tree in $G_f$ which contains $p_i$. 
By the above discussion, each such pair $(p_i,p_r)$ corresponds to a greedy sequence, and all greedy sequences are represented by some pair. We now restrict to pairs that are valid according to \defref{greedyseq}, that is pairs where $f_\eps(r)\geq i$. For each such pair, the length of the corresponding sequence is simply the depth of $p_i$ in the tree rooted at $p_r$. 
Thus we return as our solution the depth of $p_i$ from the valid pair $(p_i,p_r)$ where $p_i$ has minimum depth. 

All the $(p_i,p_r)$ pairs and the depths can be determined in $O(n)$ time by topologically sorting since $G_f$ is a forest. Determining the valid pairs, and the minimum depth valid pair can then be done with a simple linear scan. We thus have the following.

\begin{lemma}\lemlab{lineartime}
Assume that $f_\eps(i)$ for all $1\leq i\leq n$ has been precomputed. Then \probref{cxmink} can be solved in $O(n)$ time.
\end{lemma}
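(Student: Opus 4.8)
The plan is to exhibit an explicit algorithm running in $O(n)$ time once all $f_\eps(i)$ values are available, and argue its correctness via the greedy-sequence machinery already established, in particular \lemref{greedyopt}. First I would build the friend graph $G_f$ on vertex set $P$ by inserting, for each $i$ with $f_\eps(i) > i$, the single directed edge $p_i \to p_{f_\eps(i)}$; since every out-degree is at most $1$ and every edge points from a smaller index to a larger one, $G_f$ is a forest of in-trees, each rooted at a sink. This construction is clearly $O(n)$ given the precomputed friends.

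Next I would compute, for every vertex $p_i$, the pair $(p_i, p_r)$ where $p_r$ is the root of $p_i$'s tree, together with the depth of $p_i$ in that tree (sinks at depth $1$). Because $G_f$ is a forest, a single pass in topological (reverse edge) order — or equivalently a traversal from each root — assigns to each vertex its root and its depth in total $O(n)$ time. Then I would mark each pair $(p_i,p_r)$ as \emph{valid} exactly when $f_\eps(r) \geq i$, matching \defref{greedyseq}; this test is $O(1)$ per pair. Finally, scan all valid pairs and output the minimum depth encountered (and, if desired, the corresponding chain by walking up from the minimizing $p_i$). The whole post-processing is a couple of linear scans, so the total is $O(n)$.

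For correctness I would invoke the bijection already spelled out in the text: paths in $G_f$ ending at a sink are exactly the greedy sequences of \defref{greedyseq}, with the path's edge count (= depth of its start vertex) equal to the sequence's cardinality, and the validity condition $f_\eps(l_k) \geq l_1$ on the sequence corresponds precisely to $f_\eps(r) \geq i$ on the pair. Hence the algorithm enumerates (implicitly) all valid greedy sequences and returns one of minimum cardinality. By \lemref{greedyopt}, any valid greedy sequence of minimum cardinality is an optimal solution to \probref{cxmink}, so the returned value is correct; \obsref{greedyvalid} guarantees its cost is indeed $\leq \eps$.

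The only real subtlety — not a deep obstacle but the point that needs care — is confirming that the forest structure genuinely lets us obtain all root/depth pairs in linear time and that the validity filter is faithful to the definition; both hinge on the out-degree-$\leq 1$ and monotone-edge properties of $G_f$, which were established when the graph was introduced. Everything else is routine bookkeeping, so I expect the write-up to be short.
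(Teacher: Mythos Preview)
Your proposal is correct and follows essentially the same approach as the paper: build the friend graph $G_f$, observe it is a forest, compute root and depth for every vertex in $O(n)$ time, filter to valid pairs via the condition $f_\eps(r)\geq i$, and return the minimum depth, with correctness supplied by the bijection between root-ending paths and greedy sequences together with \lemref{greedyopt}. The only minor slip is the parenthetical ``edge count (= depth of its start vertex)'': with sinks at depth $1$, the sequence's cardinality equals the depth (number of vertices on the path), not the edge count; this does not affect the algorithm or its analysis.
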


The question now then is how quickly can we compute all of the $f_\eps(i)$ values. To that end, we first argue that with some precomputation the $cost_0(i,j)$ values can be queried efficiently. To do so, we make use a result from \cite{dmsw-fpqgcc-06} which builds a datastructure for a geometric query they call Farthest Vertex in a Halfplane, which we rephrase below using our notation. 

\begin{lemma}[\cite{dmsw-fpqgcc-06}]\lemlab{cited}
Let $P\subset \Re^2$ be a point set in convex position. $P$ can be preprocessed in $O(n\log n)$ time such that given a query $(q,l_q)$, where $q$ is a point and $l_q$ is a directed line through $q$, in $O(\log^2 n)$ time one can return the farthest point from $q$ among the points in $P$ to the left of $l_q$.
\end{lemma}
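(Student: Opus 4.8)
Since this is a data structure lemma, the plan is to build a balanced binary search tree over $P$ and answer a query by combining answers from $O(\log n)$ canonical subtrees, each equipped with a farthest-point structure. Concretely, index $P = \langle p_1,\dots,p_n\rangle$ in convex position order (clockwise around $\CH(P)$), and build a balanced BST $T$ whose leaves are $p_1,\dots,p_n$ in this order; each node $v$ of $T$ then corresponds to a contiguous arc $P_v$ of the convex polygon. For every node $v$ we precompute the \emph{farthest-point Voronoi diagram} $\mathrm{FPVD}(P_v)$ together with an optimal planar point-location structure on it. To answer a query $(q,l_q)$: (i) identify the contiguous set $A \subseteq P$ of input points strictly to the left of $l_q$; (ii) decompose $A$ into $O(\log n)$ canonical nodes $v_1,\dots,v_t$ of $T$; (iii) for each $v_i$, locate $q$ in $\mathrm{FPVD}(P_{v_i})$ to obtain the point of $P_{v_i}$ farthest from $q$; (iv) return the farthest among these $t$ candidates.

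\textbf{Locating the arc to the left of $l_q$.} The key geometric fact is that the vertices of a convex polygon lying in a fixed halfplane form a contiguous chain of its boundary, so $A$ is indeed a cyclic subinterval of $\langle p_1,\dots,p_n\rangle$. To find its two endpoints in $O(\log n)$ time, I would use that the signed distance from $p_i$ to the (oriented) line $l_q$ is a bitonic function of $i$ around the polygon: binary-search for the vertex maximizing it and the vertex minimizing it (the standard logarithmic extreme-vertex search on a convex polygon), and then binary-search along each of the two resulting monotone arcs for the sign change. This is exactly computing the two points where the supporting line of $l_q$ crosses $\partial \CH(P)$, which identifies $A$; splitting $A$ into canonical BST nodes is then routine and also takes $O(\log n)$.

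\textbf{Correctness and running times.} For each node $v_i$, $\mathrm{FPVD}(P_{v_i})$ by definition assigns to the cell containing $q$ the site of $P_{v_i}$ that is farthest from $q$, so step (iii) is correct for each piece, and since the $P_{v_i}$ partition $A$, taking the maximum over the $t$ candidates yields the farthest point of $A$ from $q$, as required. Point location in a diagram of $m$ cells costs $O(\log m)$ after $O(m)$-time preprocessing, and each node's diagram has complexity $O(|P_{v_i}|)$, so the query cost is $\sum_{i=1}^t O(\log |P_{v_i}|) = O(\log^2 n)$. For preprocessing, each point of $P$ lies in $O(\log n)$ nodes, so $\sum_v |P_v| = O(n\log n)$; the only thing needed to keep the total at $O(n\log n)$ rather than $O(n\log^2 n)$ is that, because $P_v$ arrives already sorted in convex position, $\mathrm{FPVD}(P_v)$ can be built in \emph{linear} time $O(|P_v|)$ using the linear-time farthest-point Voronoi diagram algorithm for convex polygons, and the accompanying point-location structure is likewise linear-time to build. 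Adding the initial $O(n\log n)$ sort of $P$ around its hull gives $O(n\log n)$ preprocessing overall.

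\textbf{Main obstacle.} The delicate points are exactly the two places where we must exploit convex position: (1) establishing that ``left of $l_q$'' is a contiguous arc and that it can be pinned down in $O(\log n)$ via the bitonicity of signed distance (including careful handling of endpoints and of the general-position assumption so that ``strictly to the left'' is unambiguous); and (2) invoking the linear-time construction of $\mathrm{FPVD}$ on an already-sorted convex-position subset, since the generic $O(m\log m)$ construction would blow the preprocessing budget to $O(n\log^2 n)$. Everything else — the canonical decomposition of a BST interval, and standard optimal planar point location — is routine.
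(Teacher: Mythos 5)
This lemma is not proved in the paper at all: it is imported as a black box from the cited reference \cite{dmsw-fpqgcc-06}, so there is no in-paper argument to compare against. Your reconstruction is correct and is essentially the standard (and, to my knowledge, the cited paper's) construction: a balanced tree over the convex-position order giving canonical arcs, a farthest-point Voronoi diagram with point location per canonical node, an $O(\log n)$-time identification of the contiguous arc to the left of $l_q$ via the bitonicity of signed distance, and an $O(\log n)\cdot O(\log n)$ query over the canonical pieces. You also correctly isolate the one load-bearing external ingredient, namely that each canonical farthest-point Voronoi diagram must be built in time linear in its arc length (either by the linear-time algorithm for points given in convex position order, or equivalently by building the tree bottom-up with linear-time merges of sibling diagrams), since the generic $O(m\log m)$ construction would push preprocessing to $O(n\log^2 n)$; with that ingredient the stated $O(n\log n)$ preprocessing and $O(\log^2 n)$ query bounds follow.
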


\begin{lemma}\lemlab{precompConv}
Let $P=\{p_1,\ldots, p_n\} \subset \mathbb{R}^2$ be a point set in convex position, labeled in clockwise order. With $O(n \log n)$ precomputation time, for any query index pair $(i,j)$, $cost_0(i,j)$ can be computed in $O(\log^2 n)$ time.
\end{lemma}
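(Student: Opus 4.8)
The plan is to recall what $cost_0(i,j)$ measures and massage it into a form that \lemref{cited} can answer. By \defref{costg}, $cost_0(i,j) = \max_{i\preceq v\preceq j}\|p_v - \overline{p_ip_j}\|$, i.e.\ the farthest distance from the segment $\overline{p_ip_j}$ to any vertex of $P$ lying clockwise between $p_i$ and $p_j$. Since all these $p_v$ lie on one side of the line $\ell$ through $p_i$ and $p_j$ (convex position), I first want to replace ``distance to the segment'' by ``distance to the line'' and then by ``distance to a point''. The subtlety is exactly the one already handled in \lemref{helper}: for a vertex $p_v$ between $p_i$ and $p_j$, its nearest point on $\overline{p_ip_j}$ is either its orthogonal projection onto $\ell$ (when $p_v$ lies in the slab between the perpendiculars $\ell_{p_i},\ell_{p_j}$ at the two endpoints) or one of the endpoints $p_i,p_j$ (when it lies outside the slab). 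So the maximum over $v$ splits into at most three sub-maxima.

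Concretely, I would set up three queries. First, let $t_i$ (resp.\ $t_j$) be the last index, going clockwise from $p_i$ (resp.\ backward from $p_j$), whose vertex still lies outside the slab on the $p_i$-side (resp.\ $p_j$-side); these ranges of indices are exactly the vertices whose closest point on the segment is the endpoint $p_i$ (resp.\ $p_j$). For those ranges the relevant quantity is $\max \|p_v - p_i\|$ over a contiguous clockwise range, and likewise for $p_j$; each such ``farthest point from a fixed point among a contiguous clockwise arc'' is a special case of Farthest Vertex in a Halfplane (bound the arc by a directed line through $p_i$ resp.\ $p_j$), so \lemref{cited} answers it in $O(\log^2 n)$. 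Second, for the vertices in the slab the quantity is $\max$ distance to the line $\ell$; since those vertices are all on one side of $\ell$, the farthest from $\ell$ is the farthest in the direction normal to $\ell$, which can be read off from a precomputed structure for extreme-point-in-a-direction queries on the convex polygon $P$ (again $O(\log n)$ or $O(\log^2 n)$). The answer $cost_0(i,j)$ is the max of these three values, plus the easy boundary cases (e.g.\ $j=i+1$, where the answer is $0$, and the degenerate case where the slab is empty or contains all of the arc).

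The precomputation is just: build the \lemref{cited} data structure on $P$ in $O(n\log n)$ time, and (if needed) a standard $O(n\log n)$ structure for logarithmic extreme-point queries on the convex polygon. Given a query $(i,j)$, the split indices $t_i,t_j$ delimiting the slab are found in $O(\log n)$ time by binary search along the (cyclically ordered) vertex list, using the sign of the dot product of $p_v-p_i$ (resp.\ $p_v-p_j$) with the direction of $\ell$; each of the $O(1)$ range/extremal queries then costs $O(\log^2 n)$, for a total of $O(\log^2 n)$ per query, as claimed.

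The main obstacle I anticipate is not any single query but the bookkeeping of the cyclic index arithmetic together with the case analysis of how the arc $\{v : i\preceq v\preceq j\}$ meets the slab between $\ell_{p_i}$ and $\ell_{p_j}$: one must argue (leaning on convex position, exactly as in \lemref{helper}) that this arc decomposes into at most one ``closest-to-$p_i$'' prefix, one ``in-slab'' middle part, and one ``closest-to-$p_j$'' suffix — in particular that the slab region does not chop the arc into more than three monotone pieces — so that a constant number of halfplane/extremal queries suffice. Once that structural claim is in hand, everything else is routine.
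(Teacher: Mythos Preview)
Your proposal is correct and follows essentially the same approach as the paper: both decompose $cost_0(i,j)$ into the three pieces $C_i$, $Slab(i,j)$, $C_j$, handle the slab by an extreme-point/binary-search query on the convex chain, and handle each cone with a single call to \lemref{cited}. The one tactical difference is in how the cone query is set up: you first binary search along the arc for the split index $t_i$ (which requires the structural claim you flag, that the $C_i$ vertices form a contiguous prefix), and then separate that prefix by a line through $p_i$; the paper instead observes that the cone $C_i$ and its reflection $C_i'$ across $\ell$ cannot both contain points of $P$ (else $p_i$ would lie inside a triangle of $P$-points), so it simply queries \lemref{cited} with the perpendicular line at $p_i$ and discards the answer if it lands in $C_i'$. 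The paper's trick avoids both the binary search for $t_i$ and the arc-decomposition argument, but your route is equally valid and yields the same bounds.
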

\begin{proof}
 Let $\ell=\ell(p_i,p_j)$ be the line through $p_i$ and $p_j$, which we view as being oriented in the direction from $p_i$ towards $p_j$.
 Also, let $r_i$ and $r_j$ denote the rays originating at $p_i$ and $p_j$ respectively, pointing in the direction orthogonal to $\ell$ and on the left side side of $\ell$.
 Finally, let $P_{i,j} = \{p_k\in P \mid i\prec k\prec j\}$, and thus $cost_0(i,j) = \max_{x\in {P_{i,j}}} ||x-\overline{p_i,p_j}||$. 
 
 Observe that the projection of any point $x\in P_{i,j}$ onto $\ell$ either lies on the portion of $\ell$ before $p_i$, on the line segment $\overline{p_ip_j}$, or on the portion of $\ell$ after $p_j$. 
 Thus we have a natural partition of $P_{i,j}$ into three sets, the subset in the right angle cone $C_i$ bounded by $\ell$ and $r_i$, those in the slab $Slab(i,j)$ bounded by $\ell$, $r_i$, and $r_j$, and those in the right angle cone $C_j$ bounded by $\ell$ and $r_j$. Observe that for any point $x$ in $C_i$ or $C_j$, its closest point on $\overline{p_ip_j}$ is $p_i$ or $p_j$, respectively, and moreover $||x-\ell||\leq ||x-\overline{p_ip_j}||$. Thus we have that, 
  \begin{align*}
  cost_0(i,j) &= \max_{x\in {P_{i,j}}} ||x-\overline{p_i,p_j}||\\
 &= \max\{\max_{x\in C_i\cap P_{i,j}} ||x-p_i||, \max_{x\in C_j\cap P_{i,j}} ||x-p_j||, \max_{x\in Slab(i,j)\cap P_{i,j}} ||x-\overline{p_ip_j}||\}\\
 &= \max\{\max_{x\in C_i\cap P_{i,j}} ||x-p_i||, \max_{x\in C_j\cap P_{i,j}} ||x-p_j||, \max_{x\in P_{i,j}} ||x-\ell||\}.
 \end{align*} 
Therefore, it suffices to describe how to compute each of the three terms in the stated time. Computing $\max_{x\in P_{i,j}} ||x-\ell||$ is straightforward as the points in $P_{i,j}$ are in convex position and in particular if we consider them in their clockwise order, then their distance to $\ell$ is a concave function. 
So assume that $P$ is given in an array sorted in clockwise order. (If not, we can compute such an array with $O(n\log n)$ preprocessing time by computing the convex hull.) 
Then given a query pair $(i,j)$, in $O(\log n)$ time we can binary search over $P_{i,j}$ to find $\max_{x\in P_{i,j}} ||x-\ell||$, since $P_{i,j}$ is a subarray of $P$. (Note if $j<i$ then technically $P_{i,j}$  is two subarrays.)

Now consider the subset in the right angle cone $C_i$ (a similar argument will hold for $C_j$). Let $C_i'$ be the cone $C_i$ but reflected over the line $\ell$. Suppose that both $C_i$ and $C_i'$ contained  points from $P$, call them $p$ and $p'$, respectively. Then observe that the triangle $\Delta(p,p',p_j)$ would contain the point $p_i$, which is a contradiction as $P$ was in convex position. Thus either $C_i\cap P =\emptyset$ or $C_i'\cap P=\emptyset$. 
So let $L$ be the line orthogonal to $\ell$, passing through $p_i$, and oriented so that $C_i$ and $C_i'$ lie to the left
(i.e.\ $L$ is the line supporting the ray $r_i$ from above).
By \lemref{cited}, we can preprocess $P$ in $O(n\log n)$ time, such that in $O(\log^2 n)$ time we can compute the point in $P$ furthest from $p_i$ and to the left of $L$. If the returned point lies in $C_i'$ then we know $C_i\cap P=\emptyset$ and so $\max_{x\in C_i\cap P_{i,j}} ||x-p_i||=0$. If the returned point lies in $C_i$ then it realizes $\max_{x\in C_i\cap P_{i,j}} ||x-p_i||$.
\end{proof}

\begin{theorem}\thmlab{minkfinal}
 \probref{cxmink} can be solved in $O(n\log^2 n)$ time.
\end{theorem}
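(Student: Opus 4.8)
The plan is to reduce the whole theorem to computing the $n$ values $f_\eps(1),\ldots,f_\eps(n)$ and then invoking \lemref{lineartime}. As preprocessing I would first run the farthest-point Voronoi diagram step described after \defref{friends}: in $O(n\log n)$ time this either returns a single point solving the instance, or certifies that $f_\eps(i)\neq i$ for every $i$, which I assume henceforth. Next I would build, in $O(n\log n)$ time, the structure of \lemref{precompConv}, so that any value $cost_0(i,j)$ can afterwards be queried in $O(\log^2 n)$ time. Given all this, what remains is to compute every $f_\eps(i)$ using only $O(n)$ such queries.

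The engine is the monotonicity of $cost_0$ provided by \lemref{ranges} with $k=0$. On the one hand, for fixed $i$ and $i\preceq v\preceq w$ we have $cost_0(i,v)\le cost_0(i,w)$, so the vertices $p_j$ with $cost_0(i,j)\le\eps$ form a clockwise-contiguous run starting just after $p_i$, and $f_\eps(i)$ is its last element. On the other hand, for $i\preceq i'\preceq v$ we get $cost_0(i',v)\le cost_0(i,v)$, so advancing the left endpoint from $i$ to $i'$ cannot move the clockwise position of the friend backwards. Unrolling the cyclic order (placing $p_t$ at position $t$ and also at position $t+n$), this says exactly that the map sending $i$ to $f_\eps(i)$, read as an integer in $\{i+1,\ldots,i+n-1\}$ — the upper bound holding because $f_\eps(i)\neq i$ — is nondecreasing in $i$.

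This is precisely the setting for a two-pointer sweep. I would maintain an index $j$, initialized to $1$, and process $i=1,2,\ldots,n$ in clockwise order; for each $i$, repeatedly advance $j$ one step clockwise as long as $\mathrm{next}(j)\neq i$ and $cost_0(i,\mathrm{next}(j))\le\eps$, then output $f_\eps(i):=j$. Each inner step costs one $cost_0$ query, i.e.\ $O(\log^2 n)$ time, and advances $j$; by the monotonicity just established, $j$ only moves forward in the unrolled order, from position at most $n$ after the first iteration up to position at most $2n-1$, so it advances $O(n)$ times over the entire sweep, and charging one additional failed-test query to each $i$ gives $O(n)$ queries overall, hence $O(n\log^2 n)$ time. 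Finally, \lemref{lineartime} turns the computed $f_\eps$ array into a solution of \probref{cxmink} in $O(n)$ additional time, for a total of $O(n\log^2 n)$.

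The delicate point — the main obstacle — is the cyclic bookkeeping behind the $O(n)$ bound on pointer advances: I have to argue that when the friend ``wraps around'' past $p_n$ the reaches from $p_i$ to $f_\eps(i)$ stay non-crossing, so that the pointer $j$ inherited from iteration $i-1$ lies at or after $p_i$ (this is where $f_\eps(i-1)\neq i-1$ is used) and never has to backtrack, and I must also dispatch the degenerate case where a single reach covers nearly all of $P$. The geometry is entirely encapsulated in \lemref{ranges} and \lemref{precompConv}; what is left is just organizing the sweep and the modular index arithmetic.
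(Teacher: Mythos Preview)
Your proposal is correct and follows essentially the same approach as the paper: reduce to computing all $f_\eps(i)$ via \lemref{lineartime}, use the monotonicity from \lemref{ranges} to justify a two-pointer sweep that makes $O(n)$ calls to $cost_0$, and charge each call $O(\log^2 n)$ via \lemref{precompConv}. The paper's proof is slightly terser about the cyclic bookkeeping you flag as the delicate point, but the argument is the same.
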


\begin{proof}
By \lemref{lineartime}, given the $f_\eps$ values have been computed, \probref{cxmink} can be solved in $O(n)$ time. Thus to prove the theorem it suffices to compute $f_\eps(i)$ for all $i$ in $O(n\log^2 n)$ time.
Recall that $f_\eps(i)$ is the index $z$ of the vertex furthest from $p_i$ in the clockwise ordering of $P$, such that $cost_0(i,z)\leq \eps$. First observe that as we increase $i$, $f_\eps(i)$ moves clockwise. More precisely, by \lemref{ranges},
$\eps \geq cost_0(i,f_\eps(i)) \geq cost_0(i+1,f_\eps(i)) \geq cost_0(i+1,j)$, for any $i+1 \leq j \leq f_\eps(i)$, and thus $i\preceq f_{\eps}(i) \preceq f_\eps(i+1)$. Moreover, again by \lemref{ranges}, the indices $j$ such that $cost_0(i,j)\leq \eps$ are consecutive in the clockwise ordering of $P$.

This suggests a simple strategy to compute the $f_\eps(i)$ values. Namely, to find $f_\eps(1)$, we compute all values  $cost(1,j)$, starting with $j=3$ and increasing $j$ until we find a value $j'$ such that $cost_0(1,j')>\eps$. This implies $f_\eps(1)=j'-1$, since as mentioned above the values such that $cost_0(1,j)\leq \eps$ are consecutive.
More generally, to compute $f_\eps(i+1)$, we compute all values  $cost_0(i+1,j)$, starting with $j=f_\eps(i)+1$ and increasing $j$ until we find a value $j'$ such that $cost_0(i+1,j')>\eps$, which again by the above implies $f_\eps(i+1)=j'-1$.

The total time is clearly bounded by the time it takes to compute all the queried $cost_0$ values. Observe that when the algorithm queries a value $cost_0(i,j)$ then the previous $cost_0$ query was either to $cost_0(i-1,j)$ or $cost_0(i,j-1)$, implying that in total we compute $O(n)$ $cost_0$ values. By \lemref{precompConv}, with $O(n\log n)$ precomputation, any $cost_0$ value can be computed in $O(\log^2 n)$ time. Thus the total time is $O(n\log^2 n)$.
\end{proof}

\subsection{The min-$\eps$ Algorithm}

In this section we design an efficient algorithm for \probref{cxmineps}, where $k$ is given and our goal is to minimize $\eps$. To do so, we will use our algorithm from the previous section for \probref{cxmink}, where $\eps$ was fixed and we were minimizing $k$. Specifically, throughout this section, given an instance $P,k$ of \probref{cxmineps}, we use $Decider(\eps)$ to denote the procedure which runs the algorithm of \thmref{minkfinal} on the instance $P,\eps$ of \probref{cxmink} and returns True if the solution found uses $\leq k$ points, and returns False otherwise. 

Let $\mathcal{E}=\{cost_0(i,j)\mid 1\leq i,j\leq n\}$. We call $\mathcal{E}$ the set of \emph{critical values}, where observe that by \corref{decompose}, the optimal solution to the given instance of \probref{cxmineps} is a critical value in the set $\mathcal{E}$. Thus a natural approach would be to explicitly compute, sort, and then binary search over $\mathcal{E}$ using $Decider(\eps)$. However, such an approach would require at least quadratic time  as $|\mathcal{E}|=\Theta(n^2)$.  We now argue that by using random sampling we can achieve near linear running time with high probability. Similar sampling strategies have been used before, and in particular we employ a strategy which was used in \cite{hr-fdre-14} for computing the Frechet distance.
We first observe that one can efficiently sample values from $\mathcal{E}$.

\begin{lemma}\lemlab{sample}
 With $O(n\log n)$ precomputation time, one can sample a value uniformly at random from $\mathcal{E}$ in $O(\log^2 n)$ time.
\end{lemma}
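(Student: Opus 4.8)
The plan is to reduce sampling from $\mathcal{E}$ to a single evaluation of $cost_0$ at a random index pair, treating \lemref{precompConv} as a black box. The precomputation is exactly the $O(n\log n)$ preprocessing of that lemma: compute $\CH(P)$ so that the points of $P$ are stored in an array in clockwise order, and build the Farthest-Vertex-in-a-Halfplane data structure of \lemref{cited}. No further preprocessing is required.

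To produce one sample, I would draw indices $i$ and $j$ independently and uniformly at random from $\{1,\ldots,n\}$ and output $cost_0(i,j)$. Since $\mathcal{E}=\{cost_0(i,j)\mid 1\leq i,j\leq n\}$ is exactly the multiset of $cost_0$-values ranging over all $n^2$ ordered index pairs, and a uniformly random ordered pair $(i,j)$ is chosen with probability $1/n^2$, the output is distributed uniformly over $\mathcal{E}$. Generating the pair takes $O(1)$ time (or $O(\log n)$ if one charges for the $\Theta(\log n)$ random bits), and by \lemref{precompConv} the value $cost_0(i,j)$ is then computed in $O(\log^2 n)$ time. Hence each sample costs $O(\log^2 n)$ after the one-time $O(n\log n)$ preprocessing, as claimed.

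The only point that needs care — and the only place a subtlety could hide — is the meaning of ``uniformly at random from $\mathcal{E}$''. I would make explicit that $\mathcal{E}$ is to be read as a \emph{multiset} indexed by the pairs $(i,j)$; equivalently, the sampling distribution assigns each distinct critical value a probability proportional to the number of ordered index pairs that realize it, which is precisely the guarantee the randomized selection procedure in the next section uses. Note also that ordered (not unordered) pairs must be used, since $cost_0(i,j)\neq cost_0(j,i)$ in general. I do not expect any genuine obstacle here: this lemma is essentially a convenience wrapper around \lemref{precompConv}, so the ``hard part'' is merely to state the reduction cleanly and to record the multiset convention.
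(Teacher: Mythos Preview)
Your proposal is correct and essentially identical to the paper's proof: both do the $O(n\log n)$ preprocessing of \lemref{precompConv}, draw a uniformly random index pair, and evaluate $cost_0$ in $O(\log^2 n)$ time. The only cosmetic difference is that the paper samples $j$ from $\{1,\ldots,n\}\setminus\{i\}$ (so it avoids the diagonal $i=j$), whereas you sample over all $n^2$ ordered pairs; since $cost_0(i,i)=0$ and the downstream analysis in \lemref{whp} only uses $|\mathcal{E}|\le n^2$, this distinction is immaterial, and your explicit remark about the multiset reading is a welcome clarification.
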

\begin{proof}
To sample a pair from $1\leq i,j\leq n$ uniformly at random, we first sample an integer uniformly at random from $[1,n]$ for $i$, and then sample an integer uniformly at random from $[1,n-1]$ for $j$ (where $j$ is indexed from the set with $i$ removed).
This takes $O(1)$ time given the standard assumption that sampling a random integer in a given range takes $O(1)$ time. (Even if it took $O(\log n)$ time it would not affect the overall time.)
Now to sample a value uniformly at random from $\mathcal{E}$ we just need to compute $cost_0(i,j)$. From \lemref{precompConv} this can be done in $O(\log^2 n)$ time with $O(n \log n)$ precomputation time.
\end{proof}

Before presenting our algorithm, we require the following subroutine.

\begin{lemma}\lemlab{extract}
 Given an interval $[\alpha,\beta]$, then the set $X=[\alpha,\beta]\cap \mathcal{E}$ can be computed in $O((n\log n+|X|)\log^2 n)$ time. Let $Extract(\alpha,\beta)$ denote this procedure.
\end{lemma}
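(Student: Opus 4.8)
The plan is to reduce the computation of $X = [\alpha,\beta]\cap\mathcal{E}$ to a small number of counting/enumeration queries of the form ``how many pairs $(i,j)$ have $cost_0(i,j)\le\tau$'' and ``list all pairs $(i,j)$ with $cost_0(i,j)\le\tau$''. The key structural fact to exploit is the monotonicity from \lemref{ranges}: for a fixed $i$, the set of indices $j$ with $cost_0(i,j)\le\tau$ forms a consecutive arc in the clockwise ordering starting at $i$, and as $\tau$ grows this arc only grows. So for each threshold $\tau$ and each row $i$ there is a well-defined ``reach'' $g_\tau(i)$, the analogue of the friend index $f_\tau(i)$, namely the furthest clockwise index $j$ with $cost_0(i,j)\le\tau$. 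Moreover, exactly as in the proof of \thmref{minkfinal}, the reaches $g_\tau(\cdot)$ are themselves monotone in $i$, so all $n$ of them can be computed in $O(n)$ queries to \lemref{precompConv}, i.e.\ in $O(n\log^3 n)$ time, for any single $\tau$.

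**Key steps.** First I would run the friend-finding sweep of \thmref{minkfinal} twice, once with $\tau=\alpha$ and once with $\tau=\beta$, obtaining $g_\alpha(i)$ and $g_\beta(i)$ for all $i$; this costs $O(n\log^3 n)$. By monotonicity, for row $i$ the values $cost_0(i,j)$ lying in $[\alpha,\beta]$ are exactly those for $j$ strictly past $g_\alpha(i)$ up to and including $g_\beta(i)$ — a contiguous arc of indices. Hence $|X| = \sum_i (\text{length of that arc})$ (up to the usual care about whether we count distinct values or pairs; since the problem speaks of $\mathcal{E}$ as a set this is at most the pair count, which is the relevant upper bound for the claimed running time). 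Second, I would enumerate these pairs directly: for each row $i$ walk $j$ from $g_\alpha(i)+1$ to $g_\beta(i)$, calling \lemref{precompConv} to evaluate $cost_0(i,j)$ and inserting it into the output set $X$. The number of such evaluations is $n$ (one per row, to step off the end of each arc, amortized as in \thmref{minkfinal}) plus $|X|$ (one per reported pair), each costing $O(\log^2 n)$; together with the $O(n\log^3 n)$ sweep this gives the stated $O((n\log n+|X|)\log^2 n)$ bound, since $n\log^3 n = (n\log n)\log^2 n$.

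**Main obstacle.** The delicate point is the bookkeeping around the cyclic ``wrap-around'' structure and the fact that \defref{costg} and \defref{friends} index $j$ as ``furthest clockwise from $p_i$'', so $j$ can be less than $i$ as an integer; I need the sweep to handle the two-subarray case for $P_{i,j}$ correctly, exactly as flagged in the proof of \lemref{precompConv}. A secondary subtlety is making sure the amortized ``$O(n)$ extra $cost_0$ evaluations'' accounting still holds when the walk for row $i$ starts at $g_\alpha(i)+1$ rather than at $g_\alpha(i-1)$: here the relevant monotonicity is that both $g_\alpha$ and $g_\beta$ are nondecreasing in $i$ (in the cyclic sense), so the total length swept is telescoping and the arcs for consecutive rows overlap in a controlled way. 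Once these monotonicity facts are in hand — and they follow from \lemref{ranges} verbatim as in \thmref{minkfinal} — the rest is a routine pointer walk, and the running time falls out immediately.
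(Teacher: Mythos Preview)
Your approach is correct and essentially the same as the paper's: both exploit \lemref{ranges} to argue that for each fixed $i$ the indices $j$ with $cost_0(i,j)\in[\alpha,\beta]$ form a contiguous arc $S_i$, and then enumerate $X=\bigcup_i S_i$ by walking each arc and calling \lemref{precompConv} once per pair. The only difference is how the left endpoint of each arc is located: the paper binary-searches separately for each $i$ (incurring $O(n\log n)$ $cost_0$ queries), whereas you run the two-pointer sweep of \thmref{minkfinal} at thresholds $\alpha$ and $\beta$ (incurring $O(n)$ queries). The paper in fact remarks immediately after the proof that your variant works and shaves the bound to $O((n+|X|)\log^2 n)$, but opts for the simpler binary search since it does not affect the overall running time of \algoref{randomalg}.

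Two minor cleanups: (i) your sweep actually costs $O(n\log^2 n)$, not $O(n\log^3 n)$, since it is $O(n)$ queries at $O(\log^2 n)$ each --- your stated bound is still valid, just looser than necessary; (ii) defining $g_\alpha(i)$ via the non-strict inequality $cost_0(i,j)\le\alpha$ can drop pairs with $cost_0(i,j)=\alpha$ from the reported arc $(g_\alpha(i),g_\beta(i)]$; the paper avoids this by binary-searching for the first $j$ with $cost_0(i,j)\ge\alpha$, and you should analogously sweep with the strict threshold $cost_0(i,j)<\alpha$ (or just check the boundary index afterward).
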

\begin{proof}
 Fix an index $i$.
 By \lemref{ranges} we know that $cost_0(i,j)$ increases monotonically as we move $p_j$ clockwise. Thus $S_i=\{j\mid cost_0(i,j)\in [\alpha,\beta]\}$ is a contiguous set of indices, and moreover, we can binary search  for the smallest index in this set (i.e.\ the first index $j$ in clockwise order from $i$ such that $cost_0(i,j)\geq \alpha$). After finding this smallest such index, to output the rest of $S_i$ we just simply increment $j$ until $cost_0(i,j)>\beta$. Note that $X=\cup_i S_i$, and thus to find $X$ we then repeat this procedure for all $i$.
 
Note that in each step of the algorithm we compute a $cost_0$ value, and thus the total time is bounded by the time is takes to compute all the queried $cost_0$ values. For all $n$ values of $i$ we perform a binary search, thus requiring $O(n\log n)$ $cost_0$ queries for all binary searches. For a given $i$, after the binary searching, we then perform $|S_i|$ $cost_0$ queries to determine the rest of the set $S_i$, and thus over all $i$ we perform $|X|=\sum_i |S_i|$ queries. By \lemref{precompConv} each $cost_0$ query takes $O(\log^2 n)$ time, with $O(n\log n)$ preprocessing, and so the total time is thus $O((n\log n+|X|)\log^2 n)$.
\end{proof}
We remark that it should be possible to improve the running time of the above algorithm to $O((n+|X|)\log^2 n)$ using the same approach as in the proof of \thmref{minkfinal}. However, ultimately this will not change the asymptotic running time of our overall algorithm.

\begin{algorithm2e}
    \SetKwInOut{Input}{Input}
    \SetKwInOut{Output}{Output} 
    \Input{An instance $P,k$ of \probref{cxmineps}.}
    \Output{The value $\eps$ of the optimal solution.}
    \DontPrintSemicolon
    Perform the precomputation step from \lemref{precompConv}.\;\label{line:precomp}
    Sample a set $S$ of $4n$ values from $\mathcal{E}$.\;\label{line:sample}
    Sort $S$ and binary search using $Decider$.
    Let $[\alpha,\beta]$ be the resulting interval found where $Decider(\alpha)=False$ and $Decider(\beta)=True$.\;\label{line:interval}
    Let $X=Extract(\alpha,\beta)$.\;\label{line:extracted}
    Sort $X$ and binary search using $Decider$.\;\label{line:finalsearch}
    Return the smallest value $\eps\in X$ such that $Decider$ was $True$.\;
 \caption{Algorithm for solving \probref{cxmineps}.}
\algolab{randomalg}
\end{algorithm2e}

Our algorithm for solving \probref{cxmineps} is shown in \algoref{randomalg}. The correctness of this algorithm is straightforward. 
By the discussion above the optimal value $\eps$ is in $\mathcal{E}$, and the correctness of $Decider$ follows from the previous section. Thus when we binary search over $S$ using $Decider$, we know that $\eps\in [\alpha,\beta]$. Thus, by \lemref{extract}, we know that $X=Extract(\alpha,\beta)$ contains $\eps$. Thus our final binary search over $X$ using $Decider$ is guaranteed to find $\eps$.

The more challenging question is what is the running time of \algoref{randomalg}, for which we have the following helper lemma.

\begin{lemma}\lemlab{whp}
 Let $X=Extract(\alpha,\beta)$ be the set computed on line \ref{line:extracted} in \algoref{randomalg}. Then for any $c\geq 1$, we have that $Pr[|X|> c n\ln n]<1/n^c$.
\end{lemma}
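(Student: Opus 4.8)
The plan is a standard random-sampling argument (in the style of \cite{hr-fdre-14}), with the real work being to pin down the right ground set and the exact position of the interval $[\alpha,\beta]$ relative to the optimum. First I would identify the optimum: let $\eps^*$ be the value of the optimal solution to the given instance of \probref{cxmineps}. Since a larger target distance only enlarges the family of feasible subsets (cf.\ \lemref{ranges}), $Decider(\eps)$ is monotone: it returns False for every $\eps<\eps^*$ and True for every $\eps\ge\eps^*$. Consequently, after sorting $S$ and binary searching (line~\ref{line:interval}), the returned endpoints are exactly $\alpha=\max\{s\in S : s<\eps^*\}$ and $\beta=\min\{s\in S : s\ge\eps^*\}$ — with the convention that if $S$ lies entirely on one side of $\eps^*$ we set the missing endpoint to $0$ or $+\infty$; in particular $\eps^*\in[\alpha,\beta]$.

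Next I would fix the combinatorial universe. By \lemref{sample}, $S$ is obtained by drawing $4n$ ordered index pairs $(i,j)$ with $i\ne j$ independently and uniformly; accordingly the natural universe is the multiset of $m:=n(n-1)$ index pairs, which I sort by value as $\pi_1,\ldots,\pi_m$ with $cost_0(\pi_1)\le\ldots\le cost_0(\pi_m)$ (writing $cost_0(\pi)=cost_0(i,j)$ for $\pi=(i,j)$, and breaking ties arbitrarily). This is also the right reading of $|X|$: by the argument in the proof of \lemref{extract}, $|X|$ counts the index pairs whose $cost_0$-value lies in $[\alpha,\beta]$, so $X$ is a contiguous block $\{\pi_a,\ldots,\pi_b\}$ of the sorted list, with $\alpha=cost_0(\pi_a)$, $\beta=cost_0(\pi_b)$, and $|X|=b-a+1$. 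The description of $\alpha,\beta$ above yields two facts: no sampled value lies strictly between $\alpha$ and $\beta$, so none of $\pi_{a+1},\ldots,\pi_{b-1}$ belongs to $S$; and since $\eps^*\in[\alpha,\beta]$, the pair realizing $\eps^*$ lies inside this block.

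The deterministic heart of the argument is then the following: if $|X|>cn\ln n$, the $b-a-1 = |X|-2 > cn\ln n-2$ consecutive pairs $\pi_{a+1},\ldots,\pi_{b-1}$ form a run in the sorted order none of which was sampled, and hence \emph{some} window of $w:=\lceil cn\ln n - 2\rceil$ consecutive pairs $\{\pi_s,\ldots,\pi_{s+w-1}\}$ is sample-free. There are at most $m$ such windows, and a fixed window is sample-free with probability $(1-w/m)^{4n}\le e^{-4nw/m}$. Plugging in $w=\Theta(n\ln n)$ and $m=n(n-1)=\Theta(n^2)$ gives $4nw/m\ge 4c\ln n-o(1)$, so each window is sample-free with probability at most about $n^{-4c}$; a union bound over the $O(n^2)$ windows yields $\Pr[|X|>cn\ln n]\le n^{2}\cdot n^{-4c}=n^{2-4c}\le n^{-c}$ for every $c\ge 1$.

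I expect the only genuine obstacle to be the bookkeeping in the first two steps: establishing that $Decider$'s threshold is precisely $\eps^*$ (so $[\alpha,\beta]$ straddles the optimum and its interior is provably sample-free), and recognizing that the union bound must run over the $\Theta(n^2)$-element multiset of index pairs rather than over the possibly smaller set of distinct $cost_0$ values — this is both where the sampling distribution lives and what the factor $4$ in the ``$4n$ samples'' of line~\ref{line:sample} is for, since it is exactly what lets $e^{-4c\ln n}$ absorb the $n^2$ loss in the union bound. The remaining constant-chasing (the ``$-2$'', and behavior for tiny $n$) is routine and affects none of the asymptotics.
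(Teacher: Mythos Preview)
Your argument is correct and shares the paper's core idea, but your probability calculation takes a needlessly roundabout route. You observe that $\eps^*\in[\alpha,\beta]$ and that the pair realizing $\eps^*$ lies inside the block, yet you then discard this information and union-bound over all $O(n^2)$ possible window positions. The paper instead exploits exactly the anchoring you set up but did not use: since $\eps^*$ is fixed before sampling, one can look at the deterministic set $Z$ of the $(c/2)n\ln n$ pairs immediately above $\eps^*$ in sorted order; if any element of $Z$ is sampled it becomes an upper bound for $\beta$, so the ``above-$\eps^*$'' part of $X$ has size at most $(c/2)n\ln n$. The probability $Z$ is entirely missed is at most $(1-|Z|/|\mathcal{E}|)^{4n}\le e^{-2c\ln n}=n^{-2c}<1/(2n^c)$, and a symmetric argument handles the ``below-$\eps^*$'' side, giving $1/n^c$ after a two-term union bound. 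Your sliding-window union bound recovers the same conclusion because the oversampling factor $4$ leaves enough slack to absorb the $n^2$ loss (as you correctly note), but the paper's anchored version is shorter, avoids the tie-breaking and ``$-2$'' bookkeeping entirely, and makes clearer why the binary search around $\eps^*$ matters.
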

\begin{proof}
Let $\eps$ be the optimal value to the given instance of \probref{cxmineps}.
We first argue that with high probability there are at most $(c/2) n\ln n$ values from $\mathcal{E}$ that are contained in $[\alpha,\beta]$ (i.e.\ in the set $X$) that are also larger than $\eps$. 
Let $Z$ be the $(c/2) n\ln n$ values in $\mathcal{E}$ closest to $\eps$ but also greater than $\eps$. 
(Note that if there are less than $(c/2) n\ln n$ values greater than $\eps$, then the claim trivially holds.) Observe that if our random sample $S$ on line \ref{line:sample} contains even a single value from $Z$ then the claim holds as this value then upper bounds $\beta$, and so there are at most $|Z|=(c/2) n\ln n$ values from $\mathcal{E}$ in $(\eps,\beta]$. 
The probability that the $4n$ sized random sample of values from $\mathcal{E}$ does not contain any element from $Z$ is at most
\[
 (1-|Z|/|\mathcal{E}|)^{4n} \leq (1-((c/2) n\ln n) / n^2)^{4n}
 = (1-(c\ln n)/2n)^{4n} \leq e^{-2c\ln n} = 1/n^{2c}< 1/2n^c,
\]
where we used the standard inequality $1+x\leq e^x$ for any value $x$. 
Note that a symmetric argument yields the same probability bound for the event that there are more than $(c/2)n\log n$ values from $\mathcal{E}$ contained in $[\alpha,\beta]$ that are smaller than $\eps$. Thus by the union bound, the probability that $|X|$ has more than $cn\ln n$ values is less than $1/n^c$.
\end{proof}

\begin{theorem}\thmlab{minepsfinal}
 \algoref{randomalg} solves \probref{cxmineps} in $O(cn\log^3 n)$ time with probability $\geq 1-1/n^c$, for any $c\geq 1$.
\end{theorem}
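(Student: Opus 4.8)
The plan is to bound the running time of each of the five numbered lines of \algoref{randomalg} and then sum them up, invoking \lemref{whp} to control the one quantity that is not deterministically bounded, namely $|X|$. Line \ref{line:precomp} is the preprocessing of \lemref{precompConv}, costing $O(n\log n)$ deterministically. Line \ref{line:sample} draws $4n$ samples, each of which costs $O(\log^2 n)$ by \lemref{sample} (after the precomputation already performed), for a total of $O(n\log^2 n)$.

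For line \ref{line:interval} we sort the $4n$ sampled values in $O(n\log n)$ time and run a binary search over them using $Decider$. Each call to $Decider$ runs the algorithm of \thmref{minkfinal}, which takes $O(n\log^2 n)$ time; there are $O(\log n)$ calls in the binary search, so this line costs $O(n\log^3 n)$. Line \ref{line:extracted} is a single call to $Extract(\alpha,\beta)$, which by \lemref{extract} costs $O((n\log n + |X|)\log^2 n)$. Line \ref{line:finalsearch} sorts $X$ (in $O(|X|\log|X|) = O(|X|\log n)$ time, since $|X|\leq n^2$) and does another $O(\log|X|) = O(\log n)$ calls to $Decider$, for $O(|X|\log n + n\log^3 n)$.

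Summing, the total time is $O(n\log^3 n + |X|\log^2 n)$. Now I invoke \lemref{whp}: with probability at least $1 - 1/n^c$ we have $|X| \leq cn\ln n$, and on this event $|X|\log^2 n = O(cn\log^3 n)$, so the whole algorithm runs in $O(cn\log^3 n)$ time. (On the complementary event of probability $<1/n^c$ the algorithm is still correct, as argued in the text immediately before \lemref{whp}, just potentially slower — one may either state the bound as holding with the stated probability, or note that $|X|\le n^2$ always gives a deterministic $O(n^2\log^2 n)$ fallback.) Correctness was already established in the paragraph preceding \lemref{whp}: the optimum $\eps$ lies in $\mathcal{E}$, the binary search over $S$ correctly brackets it in $[\alpha,\beta]$ with $Decider(\alpha)=\text{False}$ and $Decider(\beta)=\text{True}$, $Extract$ then returns every element of $\mathcal{E}$ in $[\alpha,\beta]$ and hence $\eps\in X$, and the final binary search over the sorted $X$ returns exactly $\eps$.

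The only non-routine step is the probabilistic one, and that work has already been done in \lemref{whp}; the remaining argument here is just the bookkeeping of adding up line costs and substituting the high-probability bound on $|X|$. So I do not expect a genuine obstacle — the only thing to be careful about is making sure the $O(\log n)$ factor from the number of $Decider$ calls multiplies the $O(n\log^2 n)$ cost of a single $Decider$ call correctly, and that the $|X|\log^2 n$ term from $Extract$ is the one place where randomness enters the time bound.
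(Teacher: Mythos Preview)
Your proposal is correct and follows essentially the same approach as the paper's own proof: both arguments bound each line of \algoref{randomalg} separately, arrive at a total of $O(n\log^3 n + |X|\log^2 n)$ (the paper writes this equivalently as $O((n\log n + |X|)\log^2 n + n\log^3 n)$), and then invoke \lemref{whp} to bound $|X|$ with high probability. Your remark about the deterministic $O(n^2\log^2 n)$ fallback even anticipates the paper's subsequent Remark.
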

\begin{proof}
The straightforward correctness of the algorithm has already been discussed above. As for the running time, the precomputation on line \ref{line:precomp} takes $O(n\log n)$ time by \lemref{precompConv}. By \lemref{sample}, it then takes $O(n\log^2 n)$ time to sample the $4n$ values in the set $S$. Sorting $S$ takes $O(n\log n)$ time, and binary searching using $Decider$ takes $O((\log n)\cdot n\log^2 n) = O(n\log^3 n)$ time by \thmref{minkfinal}. By \lemref{extract}, running $Extract(\alpha,\beta)$ on line \ref{line:extracted} to compute $X$ takes $O((n\log n+|X|)\log^2 n)$ time. Finally, sorting and binary searching over $X$ using Decider on line \ref{line:finalsearch} takes $O((n\log^2 n)(\log |X|)+|X|\log |X|)= O((n\log n+|X|)\log^2 n)$, again by \thmref{minkfinal}. 

Thus in total the time is $O((n\log n+|X|)\log^2 n + n\log^3 n)$. By \lemref{whp}, with probability at least $1-1/n^c$ we have $|X|\leq c n\ln n$, and thus with probability at least $1-1/n^c$ the total running time is $O(c n\log^3 n)$. 
\end{proof}

\begin{remark}
 Even in the extremely unlikely event that the algorithm exceeds the $O(n\log^3 n)$ time bound, the worst case running time is only $O(n^2\log^2 n)$. 
\end{remark}


\section{The General Case}\seclab{gencase}
In this section, we remove the restriction that $P$ lies in convex position, showing that \probref{mink} and \probref{mineps} can be solved efficiently by converting them into a corresponding graph problem. 

For any pair of points $a,b\in \Re^2$, define $h_l(a,b)$ to be the closed halfspace bounded by the line going through points $a$ and $b$, picking the halfspace that is to the left of the directed edge $(a,b)$. 
%
Throughout we use $P_{a,b} = P \cap h_l(a,b)$ to denote the subset of $P$ falling in $h_l(a,b)$.

We construct a weighted and fully connected directed graph $G_P=(V,E)$ where $V=P$. 
For an ordered pair of points $(a,b)$ in $P$, the weight of its corresponding directed edge is defined as $w(a,b) = cost(\{a,b\},P_{a,b})$, 
i.e.\ the distance of the furthest point in $P_{a,b}$ from the segment $\overline{ab}$.
(Relating to the previous section, when $P$ is in convex position $w(a,b)=cost_0(a,b)$.)
For a cycle of vertices $C=\{p_1,\ldots,p_k\}$, let $w(C)$ denote the maximum of the weights of the directed edges around the cycle. Throughout, we only consider non-trivial cycles, that is cycles must have at least two vertices. 

The following lemma shows how to compute edge weights. 
We remark that the first half of its proof is nearly identical to that for \lemref{precompConv}, however, the second half differs.

\begin{lemma}\lemlab{precompgeneral}
 Let $P$ be a set of $n$ points in $\Re^2$. Then one can compute $w(a,b)$ for all pairs $a,b\in P$ simultaneously in $O(n^2\log n)$ time.
\end{lemma}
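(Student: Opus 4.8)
\medskip

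The plan is to follow the same three-way decomposition used in the proof of \lemref{precompConv}, and then, since \lemref{cited} is unavailable when $P$ is not in convex position, to replace the per-query data structures by a small amount of global preprocessing that amortises the work across all $n^2$ ordered pairs. Fix an ordered pair $(a,b)$, let $\ell=\ell(a,b)$ be the line through $a,b$ oriented from $a$ to $b$, and let $\hat n$ be the unit normal to $\ell$ pointing into $h_l(a,b)$. Dropping the perpendiculars to $\ell$ at $a$ and at $b$ partitions $h_l(a,b)$ into a right-angle cone $C_a$ with apex $a$ (on the far side of $a$ from $b$), a slab, and a right-angle cone $C_b$ with apex $b$. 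As in \lemref{precompConv}, for $p$ in the slab the nearest point of $\overline{ab}$ is its foot on $\ell$, for $p\in C_a$ it is $a$, for $p\in C_b$ it is $b$, and since a point of $C_a$ (resp.\ $C_b$) is at least as far from $a$ (resp.\ $b$) as from $\ell$, one obtains the same formula
\[
 w(a,b)=\max\Bigl\{\ \max_{p\in P_{a,b}\cap C_a}||p-a||,\ \ \max_{p\in P_{a,b}}||p-\ell||,\ \ \max_{p\in P_{a,b}\cap C_b}||p-b||\ \Bigr\}.
\]
For $p\in h_l(a,b)$ we have $||p-\ell||=\langle\hat n,p-a\rangle\ge 0$, and the maximiser of $\langle\hat n,p-a\rangle$ over all of $P$ automatically lies in $h_l(a,b)$ since $a\in P$ yields value $0$; hence the middle term equals $\max_{p\in P}\langle\hat n,p\rangle-\langle\hat n,a\rangle$.

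For the middle terms I would compute $\CH(P)$ once in $O(n\log n)$ time. Then, for each fixed $a$, sort $P\setminus\{a\}$ by angle about $a$ in $O(n\log n)$ time and sweep $b$ through that order: the direction $\hat n$ rotates monotonically through a full turn, so a single pointer walking the vertices of $\CH(P)$ (equivalently, merging the sorted query directions with the cyclic list of outward edge normals of $\CH(P)$) produces $\max_{p\in P}\langle\hat n,p\rangle$, and hence the middle term, for every $b$ in $O(n)$ total time. Over all $a$ this is $O(n^2)$.

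The cone terms are where the argument departs from \lemref{precompConv}. Consider $\max_{p\in P_{a,b}\cap C_a}||p-a||$. In the cyclic angular order of $P\setminus\{a\}$ about $a$, the wedge $C_a$ is exactly an interval of angular width $\pi/2$ whose position is a monotone function of the direction of $\vec{ab}$, and the quantity maximised, $||p-a||$, does not depend on $b$. Reusing the angular sort about $a$ already available, sweeping $b$ in angular order turns this into a fixed-width sliding-window maximum over a circular array, which a monotone deque evaluates for all $b$ in $O(n)$ time. For the apex-$b$ term I would run the symmetric computation with the other endpoint fixed: sort $P\setminus\{b\}$ about $b$, observe that for edge $(a,b)$ the wedge $C_b$ is again a width-$\pi/2$ window about $b$ whose position is monotone in the angular position of $a$ about $b$, and take the sliding-window maximum of $||p-b||$ over all $a$ in $O(n)$ time. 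Altogether the algorithm performs $O(n)$ angular sorts at $O(n\log n)$ each plus $O(n)$ further work per pair, which together with the $O(n^2)$ for the middle terms gives the claimed $O(n^2\log n)$ bound.

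The main obstacle is making the cone terms genuine sliding-window maxima: one must check that each cone really is a width-$\pi/2$ angular window that moves monotonically as the free endpoint rotates, and --- the point that actually buys the amortisation --- that the value maximised over the window ($||p-a||$, resp.\ $||p-b||$) is independent of the free endpoint, so the deque never re-examines a discarded candidate. A secondary point, already implicit in \lemref{precompConv}, is that the middle term may be taken over all of $P_{a,b}$ (indeed all of $P$) rather than only the slab: a point of $P_{a,b}$ lying in a cone has distance to $\ell$ no larger than its distance to that cone's apex, so it is already captured by the corresponding cone term and cannot inflate the maximum. The degenerate cases --- $P_{a,b}$ empty or equal to $\{b\}$, forcing $w(a,b)=0$ --- are consistent, as all three terms then vanish.
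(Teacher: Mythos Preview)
Your proof is correct and follows essentially the same approach as the paper: the identical three-way decomposition into two cone terms and a line term, with the line term handled via $\CH(P)$ and the cone terms handled by angularly sorting about each apex and rotating a width-$\pi/2$ window. The only differences are at the data-structure level---the paper uses per-pair binary search on $\CH(P)$ for the line term (rather than your rotating-calipers sweep) and a max-heap (rather than your monotone deque) to maintain the farthest point during the cone rotation---but both routes are dominated by the $n$ angular sorts and arrive at the same $O(n^2\log n)$ bound.
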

\begin{proof}
 Let $\ell$ denote the line through $a$ and $b$, which we view as being oriented in the direction from $a$ towards $b$.
 Also, let $r_a$ and $r_b$ denote the rays originating at $a$ and $b$ respectively, pointing in the direction orthogonal to $\ell$ and on the side of $\ell$ containing $P_{a,b}$.
 
 Observe that the projection of any point $x\in P_{a,b}$ onto $\ell$ either lies on the portion of $\ell$ before $a$, on the line segment $\overline{ab}$, or on the portion of $\ell$ after $b$. 
 Thus we have a natural partition of $P_{a,b}$ into three sets, the subset in the right angle cone $C_a$ bounded by $\ell$ and $r_a$, those in the slab $Slab(a,b)$ bounded by $\ell$, $r_a$, and $r_b$, and those in the right angle cone $C_b$ bounded by $\ell$ and $r_b$. Observe that for any point $x$ in $C_a$ or $C_b$, its closest point on $\overline{ab}$ is $a$ or $b$, respectively, and moreover $||x-\ell||\leq ||x-\overline{ab}||$. Thus we have that, 
  \begin{align*}
  w(a,b) 
 &= \max\{\max_{x\in C_a\cap P_{a,b}} ||x-a||, \max_{x\in C_b\cap P_{a,b}} ||x-b||, \max_{x\in Slab(a,b)\cap P_{a,b}} ||x-\overline{ab}||\}\\
 &= \max\{\max_{x\in C_a\cap P_{a,b}} ||x-a||, \max_{x\in C_b\cap P_{a,b}} ||x-b||, \max_{x\in P_{a,b}} ||x-\ell||\}.
 \end{align*} 
Therefore, it suffices to describe how to compute each of the three terms in the stated time. To compute $\max_{x\in P_{a,b}} ||x-\ell||$ we use the standard fact that for any point set $P$ and line $\ell$, the furthest point in $P$ from $\ell$, on either side of $\ell$, is a vertex of $\CH(P)$. Thus the furthest point in $P_{a,b}$ from $\ell$ is a point of $\CH(P)$. 
So precompute $\CH(P)$, using any standard $O(n\log n)$ time algorithm, after which we can assume the vertices of $\CH(P)$ are stored in an array sorted in clockwise order. Observe that the subset of the vertices of $\CH(P)$ which are in $P_{a,b}$ is a subarray (or technically two subarrays if it wraps around). So we can determine the ends of this subarray by binary searching. The distances of the points in this subarray to $\ell$ is a concave function, and so we can binary search to find $\max_{x\in P_{a,b}} ||x-\ell||$. These two binary searches take $O(\log n)$ time per pair $a,b$, and thus $O(n^2 \log n)$ time in total.

To compute the $\max_{x\in C_a\cap P_{a,b}} ||x-a||$ values, we do the following (the $b$ values are computed identically). Consider a right angle cone whose origin is at $a$. We conceptually rotate this cone around $a$ while maintaining the furthest point of $P$ from $a$ in this cone. The furthest point only changes when a point enters or leaves the cone, and these events can thus easily be obtained by simply angularly sorting the points in $P$ around $a$. (Note each point corresponds to two events, an entering one, and a leaving one at the entering angle minus $\pi/2$.) To efficiently update the furthest point, we maintain a binary max heap on the distances of the points in the current cone to $a$. Building the initial max heap and sorting takes $O(n\log n)$ time. Thus all possible right angle cone values at $a$ can be computed in $O(n\log n)$ time, as there are a linear number of events and each event takes $O(\log n)$ time. Moreover, if we store these canonical right angle cone values in sorted angular order, then given a query right angle cone determined by a pair $a,b\in P$ (with cone origin $a$), the nearest canonical cone can be determined by binary searching. Thus in total computing all $\max_{x\in C_a\cap P_{a,b}} ||x-a||$ values for all pairs $a$ and $b$ takes $O(n^2 \log n)$ time. Namely, the precomputation of the canonical cones at each point takes $O(n\log n)$ time per point and thus $O(n^2\log n)$ time for all points. Then for the $O(n^2)$ pairs $a,b$ it takes $O(\log n)$ time to search for its canonical cone.
\end{proof}

For a set of points $Q$, let $\CH_L(Q)$ denote the clockwise list of vertices on the boundary of $\CH(Q)$. Observe that any subset $Q\subseteq P$ corresponds to the cycle $\CH_L(Q)$ in $G_P$. 
Moreover, any cycle $C$ corresponds to the convex hull $\CH(C)$. The following lemma is adapted from \cite{krv-schc-20}, where \probref{mink} was considered but where the $cost$ function was determined by a sum of the distances rather than the maximum distance.

\begin{lemma}\lemlab{bounded}
Consider an instance $P,\eps$ of \probref{mink}.
The following holds:
\begin{enumerate}[1)]
 \item For any cycle $C$ in $G_P$, $w(C)\geq cost(C,P) $,
 \item There exists some optimal solution $Q$ such that $w(\CH_L(Q))=cost(Q,P)$.
\end{enumerate}
\end{lemma}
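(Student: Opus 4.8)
\emph{Part (1).} The crux is: for every $p\in P$ with $p\notin\CH(C)$, some directed edge $(p_i,p_{i+1})$ of the cycle $C$ has $p\in h_l(p_i,p_{i+1})$ (i.e.\ $p\in P_{p_i,p_{i+1}}$). Suppose not, so $p$ lies strictly to the right of every directed edge of $C$. Then, traversing the closed walk $C$, over each edge the ray from $p$ to the moving point turns by an angle in $(-\pi,0)$, so the total turning is a strictly negative multiple of $2\pi$; hence $C$ winds a nonzero number of times around $p$. But all vertices of $C$, and hence all of its edges, lie in the convex set $\CH(C)$, which has winding number $0$ about any exterior point --- a contradiction. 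Granting this, fix $p\in P$: if $p\in\CH(C)$ then $\|p-\CH(C)\|=0\le w(C)$; otherwise choose an edge $(p_i,p_{i+1})$ with $p\in P_{p_i,p_{i+1}}$, and since $\overline{p_ip_{i+1}}\subseteq\CH(C)$ we get $\|p-\CH(C)\|\le\|p-\overline{p_ip_{i+1}}\|\le w(p_i,p_{i+1})\le w(C)$. Taking the maximum over $p\in P$ yields $cost(C,P)\le w(C)$.

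\emph{Part (2).} Applying (1) to $C=\CH_L(Q)$ and using $\CH(\CH_L(Q))=\CH(Q)$ gives $w(\CH_L(Q))\ge cost(Q,P)$ for \emph{every} $Q\subseteq P$, so it suffices to produce an optimal $Q$ with $w(\CH_L(Q))\le cost(Q,P)$. I would take $Q$ to be an optimal solution maximizing $\mathrm{area}(\CH(Q))$ (the optimal solutions form a nonempty finite family) and argue that $w(\CH_L(Q))>cost(Q,P)$ is impossible. In that case some edge $\overline{ab}$ of the polygon $\CH(Q)$, oriented so that $h_l(a,b)$ is its \emph{outer} halfplane (which is how $\CH_L(Q)$ directs it), carries a witness $x\in P_{a,b}$ with $\|x-\overline{ab}\|=w(\CH_L(Q))>cost(Q,P)$. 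Since $\CH(Q)\cap h_l(a,b)=\overline{ab}$ while $\|x-\overline{ab}\|>0$, we have $x\notin\CH(Q)$; and the foot of the perpendicular from $x$ to the line through $a,b$ must lie strictly beyond an endpoint --- say $b$ --- for otherwise $\|x-\overline{ab}\|$ is exactly the distance from $x$ to that line, which is at most $\|x-\CH(Q)\|\le cost(Q,P)$, a contradiction. Hence $\|x-b\|=\|x-\overline{ab}\|>cost(Q,P)\ge\|x-\CH(Q)\|$.

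Now let $c$ be the hull-neighbor of $b$ on $\CH(Q)$ other than $a$ (so $a,b,c$ are consecutive clockwise on $\CH(Q)$), and set $Q'=(Q\setminus\{b\})\cup\{x\}$; as $x\notin Q$, $|Q'|=|Q|$. The key step is to show $\CH(Q)\subseteq\CH(Q')$: granting it, $cost(Q',P)\le cost(Q,P)\le\eps$ so $Q'$ is feasible and, having the same size as the optimal $Q$, optimal; yet $\CH(Q')\supsetneq\CH(Q)$ since it contains $x$, so $\mathrm{area}(\CH(Q'))>\mathrm{area}(\CH(Q))$, contradicting maximality. Since $a,c\in Q'$, for $\CH(Q)\subseteq\CH(Q')$ it suffices to prove $b\in\Delta(a,c,x)$, and I expect this to be the main obstacle. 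It should follow by tracking the angular orderings of the rays at $a$ (towards $x$, $b$, $c$) and at $c$ (towards $x$, $b$, $a$), using that $x$ lies strictly outside edge $\overline{ab}$ with perpendicular foot beyond $b$ (forcing, e.g., $\angle xab<\pi/2$), that $x$ sits on a correspondingly controlled side of edge $\overline{bc}$ because the nearest point of $\CH(Q)$ to $x$ lies on the boundary chain clockwise past $b$ while $\|x-\CH(Q)\|<\|x-b\|$, and the convexity of $\CH(Q)$ at $a,b,c$. Once this is established --- together with the routine variant needed when the nearest point of $\CH(Q)$ to $x$ lies several edges past $b$, and the trivial case where $\CH(Q)$ is a segment, for which equality holds automatically --- feasibility and optimality of $Q'$ and termination via the area potential are all straightforward.
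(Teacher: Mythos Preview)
Your proposal is essentially correct but differs from the paper in both parts, and your Part~(2) sketch leaves the main geometric claim only partially justified.

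\medskip

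\textbf{Part (1).} Your winding-number argument is valid and is a clean alternative to the paper's approach. The paper instead takes any line $\ell$ through $p$ and an interior point of some cycle edge $(a,b)$; since $C$ is a cycle, some other edge $(c,d)$ crosses $\ell$ with the opposite orientation, and then either $p$ is to the left of $(c,d)$ or $p\in\CH(\{a,b,c,d\})\subseteq\CH(C)$, both contradictions. Your argument is arguably more conceptual; the paper's is more elementary (no topology).

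\medskip

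\textbf{Part (2).} The paper's route is different: starting from \emph{any} optimal $Q$, it looks at a point $p\notin\CH(Q)$ lying to the left of \emph{two} hull edges $\overline{ab}$ and $\overline{cd}$, observes that then $b,c\in\CH(\{a,d,p\})$, and replaces $Q$ by $Q\cup\{p\}\setminus\{b,c\}$, strictly enlarging $\CH(Q)$ without increasing $|Q|$. Iterating terminates, and the final $Q$ has every exterior point of $P$ to the left of exactly one hull edge, which forces $w(\CH_L(Q))=cost(Q,P)$. Your route --- pick an area-maximising optimal $Q$ and swap a single vertex $b$ for the witness $x$ --- is a legitimate alternative and has the advantage of avoiding an explicit iteration; the paper's version has the advantage that the key inclusion $b,c\in\Delta(a,d,p)$ is slightly more direct.

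Regarding your ``main obstacle'' $b\in\Delta(a,c,x)$: it is true, but your sketch is vague, and the cleanest way to close it is to first show $x\in h_l(b,c)$. Indeed, ``foot beyond $b$'' gives $(x-b)\cdot(b-a)>0$, and $\|x-\CH(Q)\|<\|x-b\|$ says $x-b$ is \emph{not} in the normal cone of $\CH(Q)$ at $b$, i.e.\ $(x-b)\cdot(c-b)>0$. Together with $x\in h_l(a,b)$ these force $x\in h_l(b,c)$ (in suitable coordinates one checks the cross product $(c-b)\times(x-b)>0$). Once $x\in h_l(a,b)\cap h_l(b,c)$, the inclusion $b\in\Delta(a,c,x)$ is immediate: the segment $\overline{ax}$ lies in $\overline{h_l(a,b)}$ and crosses the line through $b,c$ at a point $w$ on the ray from $b$ \emph{away} from $c$; hence $b\in\overline{cw}\subseteq\Delta(a,c,x)$. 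This also shows your anticipated ``routine variant when the nearest point lies several edges past $b$'' is unnecessary --- the location of the nearest point is irrelevant once $x\in h_l(b,c)$ is established.
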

\begin{proof}
Recall that $cost(C,P)=\max_{p \in P} \distX{p}{\CH(C)}$. 
Similarly decomposing $w(C)$ gives,
\begin{align*}
w(C) = \max_{(a,b)\in C} cost(\{a,b\},P_{a,b}) 
= \max_{p\in P}\max_{\substack{(a,b)\in C\\ \text{s.t. } p\in P_{a,b}}}\!\!\!\! \distX{p}{\overline{ab}}.
\end{align*}
To prove the first part of the lemma, we argue that for any point $p\in P$, its contribution to $w(C)$ is at least as large as its contribution to $cost(C,P)$. 
Assume $p\notin \CH(C)$, since otherwise it does not contribute to $cost(C,P)$.
It suffices to argue there exists an edge $(a,b) \in C$, such that $p \in P_{a,b}$, since $||p-\overline{ab}|| \geq ||p-\CH(C)||$.
So assume otherwise that there is some point $p \in P$ such that $p$ lies strictly to the right of all edges in $C$. Create a line $\ell$ that passes through $p$ and any interior point of any edge $(a,b) \in C$, but does not pass through any other point in $P$. 
The line $\ell$ splits the plane into two halfspaces. 
Observe that since $C$ is a cycle, there must be some edge $(c,d)$ of $C$ which also crosses $\ell$, where $c$ is in the same halfspace as $b$ and $d$ in the same halfspace as $a$ (i.e.\ they have opposite orientations with respect to $\ell$). Thus if $(c,d)$ crosses $\ell$ on the same side of $p$ along $\ell$ as the edge $(a,b)$ then $p$ would lie to the left of $(c,d)$, as it lies to the right of $(a,b)$.
On the other hand, if the intersection of $(c,d)$ with $\ell$ lied on the opposite side of $p$ along $\ell$ as the intersection point of $(a,b)$ with $\ell$, then $p\in \CH(\{a,b,c,d\})\subseteq \CH(C)$. Thus either way we have a contradiction.

To prove the second part of the lemma, let $Q$ be some optimal solution. For any $p\in P$, if $p\in \CH(Q)$ then it lies to the right of all edges in $\CH_L(Q)$, and so it does not affect $w(\CH_L(Q))$ or $cost(Q,P)$. So consider a point $p \notin \CH(Q)$. Let $\overline{ab}$ be the closest edge of $\CH(Q)$ (where $b$ follows $a$ in clockwise order). Note that $||p-\CH(Q)|| = ||p-\overline{ab}||$ and $p\in P_{a,b}$, so if $p$ lies to right of all other edges in $\CH_L(Q)$, then its contribution to $w(\CH_L(Q))$ is $||p-\overline{ab}||$. So suppose $p$ lies to the left of some other edge $\overline{cd}$ (note it may be that $b=c$). 
If this happens, then $p$ is in the intersection of the halfspace to the left of the line from $a$ through $b$ and to the left of the line from $c$ through $d$. 
This implies that $b,c\in \CH(\{a,d,p\})$. 
So let $Q'=Q\cup\{p\}\setminus\{b,c\}$. Observe that  $\CH(Q)\subset \CH(Q')$ and $|Q'| \leq |Q|$, and hence $Q'$ is an optimal solution as $Q$ was an optimal solution.
Now we repeat this procedure while there remains such a point $p$ to the left of two edges.  
We repeat this procedure only finitely many times as in each iteration the convex hull becomes larger (i.e.\ $\CH(Q)$ is a strict subset of $\CH(Q')$). 
If $Q$ denotes the hull after the final iteration, then by the above we have $w(\CH_L(Q))=cost(Q,P)$.
\end{proof}

\begin{corollary}\corlab{minkgraph}
Let $P,\eps$ be an instance of \probref{mink}, and let $C^*$ be the cycle with minimum cardinalty among cycles in $G_P$ with $w(C) \leq \eps$. Then $C^*$ is an optimal solution to the given instance of \probref{mink}.
\end{corollary}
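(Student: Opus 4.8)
The plan is to read off the corollary directly from the two parts of \lemref{bounded}. Write $k^*$ for the optimal value of the given min-$k$ instance, i.e.\ the minimum size of a subset $Q\subseteq P$ with $cost(Q,P)\le\eps$; I will show that $|C^*|=k^*$ and that the vertex set of $C^*$ is itself a feasible subset of that size, which is exactly the assertion that $C^*$ is an optimal solution.

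First I would prove $k^*\le|C^*|$. By part~1 of \lemref{bounded}, $cost(C^*,P)\le w(C^*)\le\eps$. Hence the vertex set of $C^*$ is a subset of $P$ that satisfies the cost constraint of \probref{mink}, and since this vertex set has at most $|C^*|$ points, $k^*\le|C^*|$.

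Next I would prove $|C^*|\le k^*$. Apply part~2 of \lemref{bounded} to get an optimal subset $Q$ (so $|Q|=k^*$ and $cost(Q,P)\le\eps$) with $w(\CH_L(Q))=cost(Q,P)\le\eps$. Now $\CH_L(Q)$ is a cycle in $G_P$ whose vertices form a subset of $Q$, so it has at most $|Q|=k^*$ vertices and weight $\le\eps$; since $C^*$ is a minimum-cardinality cycle among those with weight $\le\eps$, we get $|C^*|\le|\CH_L(Q)|\le k^*$. Combining the two inequalities gives $|C^*|=k^*$, and since (by the first step) the vertices of $C^*$ already form a feasible subset of size $k^*$, $C^*$ is optimal.

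There is no real obstacle here; the argument is a short bookkeeping step on top of \lemref{bounded}. The only things to be careful about are the cycle-to-subset correspondence — that $\CH_L(Q)$ is a genuine non-trivial cycle and that passing from a cycle to its vertex set can only decrease cardinality — and the degenerate case in which some single point of $P$ already lies within distance $\eps$ of all of $P$ (so $k^*=1$ and no cycle applies); as in the convex-position setting this case is detected separately in $O(n\log n)$ time, after which one may assume $k^*\ge 2$.
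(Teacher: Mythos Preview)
Your proof is correct and follows essentially the same approach as the paper: feasibility of $C^*$ via part~1 of \lemref{bounded}, and minimality by comparing $|C^*|$ against the cycle $\CH_L(Q)$ obtained from an optimal $Q$ via part~2. The paper phrases the second step as a contradiction rather than as the inequality $|C^*|\le|\CH_L(Q)|\le k^*$, but the content is identical; your explicit handling of the $k^*=1$ degenerate case and the cycle-to-vertex-set cardinality remark are careful additions that the paper's proof leaves implicit.
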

\begin{proof}
Using part 1) of \lemref{bounded} we know that $cost(C^*,P) \leq w(C^*) \leq \eps$, so $C^*$ is a solution. Suppose that $C^*$ is not an optimal solution (i.e.\ it is not of minimum cardinality). Then by part 2) of \lemref{bounded}, there exists some optimal solution $Q$ with $|Q| < |C^*|$ such that $w(\CH_L(Q))=cost(Q,P) \leq \eps$. So, there exists a cycle $\CH_L(Q)$ with cost $\leq \eps$ and size less than $|C^*|$, which is a contradiction as $C^*$ had minimal cardinality among such cycles.
\end{proof}

In the following we will reduce our problem to the all pairs shortest path problem on directed unweighted graphs, which we denote as APSP. Let $A(n)$ be the time required to solve APSP. In \cite{z-apspbsrmm-02} it is shown that $A(n)=\tilde{O}(n^{2+\mu})$,%
\footnote{We use the standard convention that $\tilde{O}(f(n))$ denotes $O(f(n)\log^c n)$ for some $c>0$.}
where $\mu$ satisfies the equation $\omega(1,\mu,1)=1+2\mu$, and where $\omega(1,\mu,1)$ is the exponent of multiplication of a matrix of size $n \times n^\mu$ by a matrix of size $n^\mu \times n$. 
\cite{l-farmm-12} shows that $\mu<0.5302$ and thus $A(n)=O(n^{2.5302})$.

\begin{theorem}\thmlab{minkgen}
Any instance $P,\eps$ of \probref{mink} can be solved in time 
\[O(A(n)+n^2 \log n) = O(n^{2.5302}).\]
\end{theorem}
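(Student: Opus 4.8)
The plan is to reduce \probref{mink} to finding a minimum-cardinality cycle of weight at most $\eps$ in $G_P$, using \corref{minkgraph}, and then to show this cycle problem reduces to APSP on an unweighted directed graph. First I would invoke \lemref{precompgeneral} to compute all edge weights $w(a,b)$ in $O(n^2\log n)$ time. Then I would build the \emph{threshold graph} $G_\eps$ on vertex set $P$, keeping exactly those directed edges $(a,b)$ with $w(a,b)\leq \eps$; this graph is unweighted and has $O(n^2)$ edges, and by \corref{minkgraph} the answer to the given instance of \probref{mink} equals the length (number of vertices) of the shortest directed cycle in $G_\eps$, provided we only count non-trivial cycles, i.e.\ cycles with at least two vertices.

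The key step is then: the shortest directed cycle through a fixed vertex $v$ has length $1 + \mathrm{dist}_{G_\eps}(v,v)$ computed appropriately — but since we need cycles of length $\geq 2$ and $G_\eps$ may or may not have self-loops, the clean way is to note that the shortest directed cycle in a directed graph equals $\min_{(u,v)\in E} \bigl(1 + \mathrm{dist}_{G_\eps}(v,u)\bigr)$, where $\mathrm{dist}$ denotes the shortest-path distance (number of edges) and we take the convention that a path from $v$ to $u$ with $v\neq u$ has length $\geq 1$, while a path from $u$ to itself has length $0$. Running APSP once on $G_\eps$ gives all the pairwise distances $\mathrm{dist}_{G_\eps}(v,u)$, after which we scan over all edges $(u,v)\in E$ and take the minimum of $1 + \mathrm{dist}_{G_\eps}(v,u)$; this correctly finds the shortest cycle of length $\geq 2$ because any such cycle decomposes into one edge $(u,v)$ plus a $v\to u$ walk of length $\geq 1$, and conversely any edge plus a $v\to u$ shortest path of length $\geq 1$ closes up into a cycle of length $\geq 2$. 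The scan over $O(n^2)$ edges is dominated by $A(n)$.

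Putting the pieces together: the edge-weight precomputation costs $O(n^2\log n)$ by \lemref{precompgeneral}; constructing $G_\eps$ costs $O(n^2)$; running APSP on the unweighted digraph $G_\eps$ costs $A(n)$; and the final scan costs $O(n^2)$. Hence the total is $O(A(n) + n^2\log n)$, and plugging in $A(n)=O(n^{2.5302})$ from \cite{z-apspbsrmm-02,l-farmm-12} gives $O(n^{2.5302})$, since this term dominates $n^2\log n$. Recovering the actual subset $Q$ (rather than just its size) follows by standard shortest-path backtracking in $G_\eps$, within the same time bound.

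The main obstacle — really the only subtle point — is the cycle-length bookkeeping: ensuring that we count the number of vertices of the cycle correctly, that we exclude the degenerate one-vertex ``cycle'' (a self-loop, which would correspond to $Q$ being a single point and is anyway not what \probref{mink} asks for once $n\geq 2$), and that the $\min$ over edges plus APSP distances genuinely realizes the minimum over all non-trivial cycles. This is handled by the decomposition argument above and by being careful about the convention that $\mathrm{dist}_{G_\eps}(v,v)=0$ versus the length of a genuine cycle through $v$.
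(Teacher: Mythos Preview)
Your proposal is correct and follows essentially the same approach as the paper: reduce via \corref{minkgraph} to finding the shortest directed cycle in the thresholded unweighted graph (the paper calls it $G_P^\eps$), compute all edge weights with \lemref{precompgeneral}, run APSP, and then scan over all edges $(u,v)$ taking $\min\bigl(1+\mathrm{dist}(v,u)\bigr)$ to recover the shortest cycle. Your extra care about excluding trivial one-vertex cycles is more explicit than the paper's treatment, but the argument and running-time accounting are otherwise identical.
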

\begin{proof}
By \corref{minkgraph}, in order to solve \probref{mink}, we just need to find a minimum length cycle with weight at most $\eps$ in the graph $G_P$ defined above. By definition a cycle has weight $\leq \eps$ if and only if all of its edge weights are $\leq \eps$. So let $G_P^\eps$ be the unweighted and directed graph obtained from $G_P$ by removing all edges with weight $>\eps$.
Thus the solution to our problem corresponds to the minimum length cycle in this unweighted graph $G_P^\eps$. This can be solved by computing APSP in $G_P^\eps$. Specifically, the solution is determined by the ordered pair $(a,b)$ with the shortest path subject to the directed edge $(b,a)$ existing in $G_P^\eps$ (i.e.\ it is the shortest path that can be completed into a cycle).  

Computing all of the edge weights in $G_P$ can be done in $O(n^2 \log n)$ time by \lemref{precompgeneral}. Converting $G_P$ into $G_P^\eps$ then takes $O(n^2)$ time. 
Given the APSP distances, finding the minimum length cycle takes $O(n^2)$ time by scanning all pairs to check for an edge. APSP on directed unweighted graphs can be solved in $A(n)=O(n^{2.5032})$ time as described above. So, the total time is 
$O(A(n)+n^2 \log n) = O(n^{2.5302})$.
\end{proof}

Let $A_k(n)$ denote the time it takes to solve APSP on directed unweighted graphs where path lengths are bounded by $k$ (i.e.\ the path length is infinite if there is no $k$ length path). \cite{agm-eapsp-97} showed that $A_k(n) = O(n^{\omega}k \log^2 k)$, where $\omega$ is the exponent of (square) matrix multiplication. \cite{cw-mmap-90} showed that $\omega < 2.376$.

\begin{theorem}\thmlab{minepsgen}
Any instance $P,k$ of \probref{mineps} can be solved in time  
\[O(\min\{A(n),A_k(n)\}(\log n) +n^2 \log n) = 
O(\min\{n^{2.5302}, kn^{2.376}\}).\]
\end{theorem}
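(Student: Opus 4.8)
The plan is to solve \probref{mineps} by binary searching over the possible values of $\eps$, using the min-$k$ reduction of \secref{gencase} as a decision procedure. First I would pin down the search space: let $\mathcal{E} = \{w(a,b) \mid a,b\in P\}$ be the set of edge weights of $G_P$, and let $\eps^\star$ be the optimal value of the given instance. Starting from any $Q$ of size at most $k$ achieving $cost(Q,P) = \eps^\star$ and repeatedly applying the replacement step from the proof of part 2) of \lemref{bounded} — which never increases the number of points nor the cost, and strictly enlarges $\CH(Q)$ — produces a set $Q'$ with $|Q'|\leq k$, $cost(Q',P) = \eps^\star$, and $w(\CH_L(Q')) = cost(Q',P)$. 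Since $w(\CH_L(Q'))$ is the maximum of several of the quantities $w(a,b)$, it lies in $\mathcal{E}$, so $\eps^\star\in\mathcal{E}$. I would then compute all $O(n^2)$ values of $\mathcal{E}$ once, in $O(n^2\log n)$ time via \lemref{precompgeneral}, and sort them in $O(n^2\log n)$ time.

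Next I would define the decision procedure. For a candidate value $\eps$, form the unweighted directed graph $G_P^\eps$ obtained from $G_P$ by deleting all edges of weight exceeding $\eps$, and set $Decider(\eps)$ to true exactly when $G_P^\eps$ contains a non-trivial cycle of length at most $k$. This is the right test: if such a cycle $C$ exists then its vertex set is a point set of size at most $k$ with $cost(C,P)\leq w(C)\leq\eps$ by part 1) of \lemref{bounded}, so $\eps\geq\eps^\star$; conversely $\CH_L(Q')$ above is a cycle of length at most $k$ and weight $\eps^\star$ in $G_P^{\eps^\star}$, so $Decider(\eps^\star)$ holds. Moreover $Decider$ is monotone, since $\eps_1\leq\eps_2$ implies $G_P^{\eps_1}$ is a subgraph of $G_P^{\eps_2}$, so any short cycle of the former survives in the latter. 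Consequently, binary searching over the sorted array of $\mathcal{E}$ with $Decider$ returns $\eps^\star$ exactly.

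It then remains to implement $Decider(\eps)$ efficiently. Deleting the heavy edges of $G_P$ to form $G_P^\eps$ takes $O(n^2)$ time once the weights are known. To detect a cycle of length at most $k$, I would compute all-pairs shortest path distances in the unweighted digraph $G_P^\eps$ — either the unrestricted distances in time $A(n)$, or, when $k$ is small, the length-bounded distances in time $A_k(n) = O(n^\omega k\log^2 k)$ via \cite{agm-eapsp-97}, whichever is smaller — and then scan all ordered pairs $(a,b)$, reporting success whenever $(a,b)$ is an edge of $G_P^\eps$ and the shortest $b\to a$ distance is at most $k-1$; this scan is $O(n^2)$. Thus one call to $Decider$ costs $O(\min\{A(n), A_k(n)\} + n^2)$. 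The binary search makes $O(\log n)$ such calls, and adding the $O(n^2\log n)$ preprocessing gives total time $O(\min\{A(n), A_k(n)\}\log n + n^2\log n)$, which is $O(\min\{n^{2.5302}, kn^{2.376}\})$ using $A(n) = O(n^{2.5302})$ and $\omega < 2.376$.

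I expect no deep obstacle here; the only places needing care are (i) checking that the cycle-length test in $Decider$ is genuinely equivalent to ``the min-$k$ answer at threshold $\eps$ is at most $k$'', which is \corref{minkgraph} together with the obvious fact that a minimum-cardinality weight-$\leq\eps$ cycle has size $\leq k$ iff some weight-$\leq\eps$ cycle of size $\leq k$ exists, and (ii) the tiny-$k$ corner case: when $k=1$ a solution is a single point and is not a cycle of $G_P$, so I would handle it directly in $O(n\log n)$ time using the farthest-point Voronoi diagram as in the remark after \defref{friends}, and otherwise assume $k\geq 2$, for which $G_P^{\eps}$ with $\eps = \max\mathcal{E}$ equals $G_P$ and always contains a $2$-cycle, making the binary search well posed.
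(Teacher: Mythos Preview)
Your proposal is correct and follows essentially the same approach as the paper: precompute and sort all $O(n^2)$ edge weights of $G_P$, then binary search over them using the APSP-based min-$k$ decision procedure from \thmref{minkgen}, optionally substituting $A_k(n)$ for $A(n)$ since $k$ is known. You are slightly more explicit in a few places (justifying $\eps^\star\in\mathcal{E}$ via the transformation in \lemref{bounded}, monotonicity of $Decider$, the $k=1$ corner case), but the overall strategy is the same.
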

\begin{proof}
The idea is to binary search using \thmref{minkgen}. Namely, the optimal solution to the instance  $P,k$ of \probref{mineps} has cost $\leq \eps$ if and only the optimal solution to the instance $P,\eps$ of \probref{mink} uses $\leq k$ points. 
Moreover, the weight of any cycle in $G_P$ is determined by the weight of some edge, and thus by the above discussion the optimal solution to the given instance of \probref{mineps} will be the weight of some edge. There are $O(n^2)$ edge weights, which we can enumerate, sort, and binary search over using \thmref{minkgen}. 
Computing all of the edge weights in $G_P$ and sorting them can be done in $O(n^2 \log n)$ time by \lemref{precompgeneral}. Thus by \thmref{minkgen}, the total time is 
$O(A(n)(\log n) +n^2 \log n) = O(n^{2.5302})$
(Note we only compute all edge weights a single time, so each step of the binary search then costs $O(A(n))$ time.)

Alternatively, since we know the value of $k$, we can get a potentially faster time for when $k$ is small, by only considering length at most $k$ paths. Specifically, in each call to our decision procedure (i.e.\ \thmref{minkgen}) instead of computing APSP, compute the APSP restricted to length $k$ paths. Then, by the discussion before the theorem, the running time becomes $O(A_k(n)(\log n) +n^2 \log n) = O((n^{\omega}k \log^2 k)(\log n) +n^2 \log n)) = O(kn^{2.376})$.
\end{proof}

\subsection{Faster Approximations}

While our focus in the paper is on exact algorithms, in this section we show how the results above imply faster approximate solutions for the general case.
First, we show that the results from \secref{convexpos} for points in convex position immediately yield near linear time $2$-approximations for the general case. More precisely, we have the following, where $V(\CH(P))$ denotes the vertices of the convex hull of $P$ 
(and recall $V(\CH(P))\subseteq P$).

\begin{lemma}
 Let $P$ be a point set in the plane. Suppose there exists some subset $Q\subseteq P$ such that $cost(Q,P)\leq \eps$ and $|Q|\leq k$.
 Then there exists a subset $Q'\subseteq V(\CH(P))$ such that $cost(Q',P)\leq \eps$ and $|Q'|\leq 2k$. 
\end{lemma}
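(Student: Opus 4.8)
The plan is to take the given subset $Q$ and replace each of its vertices by a nearby vertex of $\CH(P)$, so that the convex hull only grows and the cost stays bounded by $\eps$. The key geometric observation is that every point $q\in Q$ lies within distance $\eps$ of $\CH(P)$ is irrelevant here; rather, since $Q\subseteq P\subseteq \CH(P)$, each $q\in Q$ lies inside $\CH(P)$, but we need $Q'$ to consist of actual \emph{vertices} of $\CH(P)$. The natural construction: consider $\CH(Q)$, whose vertices are a subset $\{q_1,\dots,q_m\}$ of $Q$ (so $m\le k$) listed in clockwise order. Walk along the boundary edges $\overline{q_iq_{i+1}}$ of $\CH(Q)$; the idea is that the arc of $\CH(P)$ "above" this edge (on the far side from $\CH(Q)$) is approximated well by replacing $q_i,q_{i+1}$ with suitable hull vertices. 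Concretely, I would set $Q'$ to be, for each vertex $q_i$ of $\CH(Q)$, the pair of vertices of $\CH(P)$ obtained as the two hull vertices "adjacent" to $q_i$ in an appropriate sense — giving at most $2m\le 2k$ points.

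More carefully, here is the construction I would carry out. For each edge $e_i=\overline{q_iq_{i+1}}$ of $\CH(Q)$, let $h_i$ denote the supporting line through $q_i,q_{i+1}$; all points of $P$ on the far side of $h_i$ are within distance $\eps$ of $e_i$ (this follows from $cost(Q,P)\le\eps$ together with \lemref{helper}-style reasoning: the furthest point of $\CH(P)$ from $\CH(Q)$ across edge $e_i$ projects onto $e_i$ or is handled by an endpoint). Now for each such edge, translate $h_i$ outward (away from $\CH(Q)$) until it last touches $\CH(P)$: it touches at a vertex (or edge) of $\CH(P)$, and every vertex of $\CH(P)$ lying strictly outside $h_i$ is at distance $\le\eps$ from $e_i$, hence from $\overline{q_iq_{i+1}}$, hence from $\CH(Q)$... but that is not quite enough since we are building $Q'$, not using $Q$. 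Instead: the vertices of $\CH(P)$ naturally split into consecutive arcs, one arc $A_i$ per edge $e_i$ of $\CH(Q)$ (the vertices of $\CH(P)$ whose "outward normal direction" falls in the normal cone between edges $e_{i-1}$ and $e_i$, roughly — equivalently, the arc of $\CH(P)$ cut off by the region beyond $e_i$ but handled by the slab/cone decomposition). Take $Q'$ to be the \emph{first and last vertex of each arc} $A_i$; this is at most $2m \le 2k$ vertices, and consecutive chosen vertices span each $A_i$, so $\CH(Q')$ contains, for each arc, the chord spanning it. Then every vertex $p$ of $\CH(P)$ in arc $A_i$ lies between two consecutive vertices of $Q'$ in the clockwise order of $V(\CH(P))$, and by \lemref{helper} its distance to $\CH(Q')$ equals its distance to that spanning chord; one checks this chord lies on the far side of $h_i$ (or is close to it), so the distance is $\le\eps$. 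Since the furthest point of $P$ from $\CH(Q')$ is a vertex of $\CH(P)$ (as noted in the Preliminaries), and each such vertex lies in some arc $A_i$, we get $cost(Q',P)\le\eps$.

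The steps in order: (1) reduce to the vertices of $\CH(Q)$, getting $q_1,\dots,q_m$ with $m\le k$; (2) partition $V(\CH(P))$ into $m$ consecutive arcs $A_1,\dots,A_m$, one associated to each edge $e_i$ of $\CH(Q)$, via the outer-normal / supporting-line correspondence; (3) define $Q'$ as the two extreme vertices of each arc, so $|Q'|\le 2m\le 2k$; (4) verify via \lemref{helper} that each vertex $p$ of $\CH(P)$, lying in some arc $A_i$ and between two consecutive elements of $Q'$, has $\|p-\CH(Q')\|$ equal to its distance to the spanning chord of $A_i$; (5) bound that distance by $\eps$ using the fact that the original $\CH(Q)$ had $cost(Q,P)\le\eps$ and the chord of $A_i$ is "no further in" than the edge $e_i$ — more precisely, every point of $A_i$ is within $\eps$ of $e_i$, and the chord of $A_i$ separates $e_i$ from those points, so distance to the chord is at most distance to $e_i$ which is $\le\eps$; (6) conclude since the furthest point of $P$ from any convex set in $\CH(P)$ is a hull vertex.

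The main obstacle I expect is step (5), namely correctly arguing that replacing the edge $e_i$ of $\CH(Q)$ by the spanning chord of the arc $A_i\subseteq V(\CH(P))$ does not increase the distance from any vertex $p\in A_i$ by more than is allowed — in particular handling the cone regions $C_i$ at arc endpoints (the analogue of the $C_a,C_b$ cones in \lemref{precompConv}/\lemref{precompgeneral}), where the closest point is an endpoint rather than an interior projection, so one must check the endpoint of the chord is within $\eps$ of $p$ directly. A secondary subtlety is making the arc partition precise when an edge of $\CH(P)$ is parallel to an edge of $\CH(Q)$ or when vertices of $\CH(Q)$ are themselves vertices of $\CH(P)$; these are degenerate cases that can be absorbed by the general position assumption or a tie-breaking rule, and should not affect the $2k$ bound.
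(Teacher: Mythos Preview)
Your construction is genuinely different from the paper's, and the gap you yourself flag in step~(5) is real and unresolved. The paper sidesteps it entirely with a much shorter argument based on \emph{containment} rather than distance estimates. Concretely: for each vertex $q_i$ of $\CH(Q)$ (indexed clockwise), shoot the ray from $q_{i-1}$ through $q_i$ until it meets $\partial\CH(P)$ at a point $z$; since $q_i\in\overline{q_{i-1}z}$, replacing $q_i$ by $z$ only enlarges the hull. Then $z$ lies on some edge $\overline{xy}$ of $\CH(P)$, and replacing $z$ by $\{x,y\}$ again only enlarges the hull. Doing this for every $i$ produces $Q'\subseteq V(\CH(P))$ with $|Q'|\le 2k$ and $\CH(Q)\subseteq\CH(Q')$, so $cost(Q',P)\le cost(Q,P)\le\eps$ is immediate. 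No separation claim, no arc partition, no cone analysis.

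By contrast, your $Q'$ is \emph{not} guaranteed to satisfy $\CH(Q)\subseteq\CH(Q')$, which is why you are forced into the direct distance bound of step~(5). The separation claim there---that the chord $\overline{a_ib_i}$ lies between $e_i$ and every interior vertex $p$ of the arc $A_i$---is not established, and it is not clear it holds. In particular, nothing rules out an endpoint $q_i$ of $e_i$ (which is a point of $P$, possibly interior to $\CH(P)$) from lying inside the cap cut off by $\overline{a_ib_i}$; in that case the segment $\overline{p\,x_p}$ from $p$ to its nearest point $x_p\in e_i$ need not cross the chord, and the inequality $\|p-\overline{a_ib_i}\|\le\|p-e_i\|$ does not follow. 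The cone-region cases you mention are exactly where this can break. Even if the separation claim is ultimately salvageable via some convexity argument, proving it carefully would likely be longer than the paper's entire proof; and without it, all you get from ``$p$ within $\eps$ of $e_i$'' and ``chord within $\eps$ of $e_i$'' is a $2\eps$ bound, not $\eps$.
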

\begin{proof}
 Let $Q=\{q_1,\ldots,q_k\}$, where the points are labeled in clockwise order. First, we convert $Q$ into a subset of points on the boundary of $\CH(P)$. Specifically, consider the segment $\overline{q_{i-1}q_{i}}$. Consider the ray with base point $q_{i-1}$, and passing through $q_{i}$, and let $z$ be the point of intersection of this ray with the boundary of $\CH(P)$. Let $Q_z=\{q_1,\ldots,q_{i-1}, z, q_{i+1},\ldots,q_k\}$, and observe that $\CH(Q)\subseteq \CH(Q_z)$ as $q_{i}$ lies on the segment $\overline{q_{i-1}z}$.
 Let $\overline{xy}$ be the edge of $\CH(P)$ which contains $z$, and let $Q_z'=\{q_1,\ldots,q_{i-1}, x,y, q_{i+1},\ldots,q_k\}$. (Note if $z\in V(\CH(P))$ then we set $Q_z'=Q_z$.) Since $z\in \overline{xy}$, we have that $\CH(Q)\subseteq \CH(Q_z)\subseteq \CH(Q_z')$. Thus $cost(Q_z',P)\leq \eps$ and $|Q_z'|\leq k+1$. 
 So if we repeat this procedure for all $i$ then we will end up with a set $Q'$ such that $cost(Q',P)\leq \eps$, $|Q'|\leq 2k$, and $Q'\subseteq V(\CH(P))$. 
\end{proof}

Given an instance $P,\eps$ of \probref{mink}, where the optimal solution $Q$ has size $k$, the above implies there is set $Q'\subseteq V(\CH(P))$ such that $cost(Q',P)\leq \eps$ and $|Q'|\leq 2k$. Such a set can be found using the algorithm of \thmref{minkfinal} for the instance $V(\CH(P)),\eps$ of \probref{cxmink}, as $Q'$ is a candidate solution for this instance. 
Also, recall for $X\subseteq P$, the furthest point in $P$ from $\CH(X)$ is always in $V(\CH(P))$, and so if $cost(X,V(\CH(P)))\leq \eps$ then $cost(X,P)\leq \eps$.

Similarly, given an instance $P,k$ of \probref{mineps}, where the optimal solution $Q$ has cost $\eps$, the above implies there is set $Q'\subseteq V(\CH(P))$ such that $cost(Q',P)\leq \eps$ and $|Q'|\leq 2k$. Such a set can be found using the algorithm of \thmref{minepsfinal} for the instance $V(\CH(P)),2k$ of \probref{cxmineps}, again as $Q'$ is a candidate solution. Thus we have the following.

\begin{theorem}
 Let $P,\eps$ be an instance of \probref{mink}, where the optimal solution $Q$ has size $k$. Then in $O(n\log^2 n)$ time one can compute a set $Q'\subseteq V(\CH(P))$ such that $cost(Q',P)\leq \eps$ and $|Q'|\leq 2k$.
 
 Similarly, let $P,k$ be an instance of \probref{mineps}, where the optimal solution $Q$ has cost $\eps$. Then with probability $\geq 1-1/n^c$, for any constant $c$, in $O(n\log^3 n)$ time one can compute a set $Q'\subseteq V(\CH(P))$ such that $cost(Q',P)\leq \eps$ and $|Q'|\leq 2k$.
\end{theorem}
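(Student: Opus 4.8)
The plan is to run the convex-position algorithms of \secref{convexpos} on the hull vertices $V(\CH(P))$, using the lemma above to certify that a near-optimal solution already lives among these vertices. First I would compute $V(\CH(P))$ with any standard $O(n\log n)$ convex hull algorithm. Since $V(\CH(P))$ is in convex position and has at most $n$ points, the instance $(V(\CH(P)),\eps)$ of \probref{cxmink} is a legitimate input to \thmref{minkfinal} and the instance $(V(\CH(P)),2k)$ of \probref{cxmineps} is a legitimate input to \thmref{minepsfinal}, with running times $O(n\log^2 n)$ and $O(n\log^3 n)$ (the latter with probability $\ge 1-1/n^c$), matching the claimed bounds.

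For the min-$k$ part I would argue as follows. Applying the lemma above to the size-$k$ optimum $Q$ produces a set $Q'\subseteq V(\CH(P))$ with $|Q'|\le 2k$ and $cost(Q',P)\le\eps$; since $V(\CH(P))\subseteq P$ this also gives $cost(Q',V(\CH(P)))\le\eps$, so $Q'$ is a feasible solution of the instance $(V(\CH(P)),\eps)$ of \probref{cxmink} using at most $2k$ points. Hence the optimum of that instance uses at most $2k$ points, and \thmref{minkfinal} computes such an optimal set $X$ in $O(n\log^2 n)$ time with $cost(X,V(\CH(P)))\le\eps$. To finish I would invoke the fact recalled just before the theorem, that for any $X\subseteq P$ the point of $P$ realizing $\max_{p\in P}||p-\CH(X)||$ is a vertex of $\CH(P)$, hence lies in $V(\CH(P))$; this gives $cost(X,P)=cost(X,V(\CH(P)))\le\eps$, so $X$ is the desired set.

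The min-$\eps$ part is symmetric: applying the lemma to the cost-$\eps$ optimum $Q$ yields $Q'\subseteq V(\CH(P))$ with $|Q'|\le 2k$ and $cost(Q',V(\CH(P)))\le cost(Q',P)\le\eps$, so $Q'$ witnesses that the optimum $\eps'$ of the instance $(V(\CH(P)),2k)$ of \probref{cxmineps} satisfies $\eps'\le\eps$. By \thmref{minepsfinal}, with probability at least $1-1/n^c$ we obtain in $O(n\log^3 n)$ time an optimal set $X$ for that instance with $|X|\le 2k$ and $cost(X,V(\CH(P)))=\eps'\le\eps$, and the same hull-vertex observation upgrades this to $cost(X,P)\le\eps$. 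The only point requiring care --- and the nearest thing to an obstacle --- is the two-directional comparison of costs measured against $P$ versus against $V(\CH(P))$: shrinking the target set to $V(\CH(P))$ cannot increase cost (immediate since $V(\CH(P))\subseteq P$), while the fact that the maximizing point is always a hull vertex shows it cannot decrease the relevant cost either, so feasibility transfers both ways. Everything else is bookkeeping on cardinalities and running times.
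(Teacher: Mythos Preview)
Your proposal is correct and follows essentially the same approach as the paper: apply the preceding lemma to certify a $2k$-point solution among the hull vertices, run the convex-position algorithms of \secref{convexpos} on $V(\CH(P))$, and use the observation that the farthest point of $P$ from $\CH(X)$ is always a hull vertex to transfer feasibility from $V(\CH(P))$ back to $P$. Your treatment is in fact slightly more explicit than the paper's about the two-directional cost comparison, but the argument is the same.
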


Finally, we remark that if one allows approximating the best $k$ point solution with $k+1$ points (i.e.\ a $(1+1/k)$-approximation), then our graph algorithms from the previous subsection imply near quadratic time approximations (i.e.\ compared to the theorem above, we are trading near linear running time for approximation quality). The idea is, rather than solving APSP, if we chose an appropriate starting point, we can instead solve for single source shortest paths.
Similar observations have been made before for related problems \cite{abosy-fmcnp-89,ar-hacp-05}.
%
%

For a given instance $P,\eps$ of \probref{mink}, let $Q$ be an optimal solution where $|Q|=k$. Let $p$ be an arbitrary point in $V(\CH(P))$. Let $Q'=Q\cup\{p\}$. Observe that $cost(Q',P)\leq cost(Q,P)\leq \eps$ and $|Q'|\leq k+1$. Thus the optimal solution to this instance of \probref{mink}, but where we require it include $p$, is a valid solution to the instance without this requirement, and uses at most one more point.

Now we sketch how the results from \secref{gencase} directly extend to the case where we want the optimal solution restricted to including $p$.
Specifically, for the analogue of \corref{minkgraph}, let $C^*$ be the minimum cardinality cycle in $G_P$ with weight at most $\eps$ such that the cycle includes $p$. To argue that $C^*$ is an optimal solution to the given instance of \probref{mink} among those which must include the point $p$, we need to extend \lemref{bounded} to require including $p$. Part 1) of the lemma immediately extends. The proof of Part 2) starts with some optimal solution $Q$. 
It then performs a transformation of $Q$ into a set $Q'$ so that points are not to the left of two edges, which one can argue implies $w(\CH_L(Q'))=cost(Q',P)$.
This transformation has the properties that $|Q'|\leq |Q|$ and $\CH(Q)\subseteq \CH(Q')$, and hence $cost(Q',P)\leq cost(Q,P)$, and so since $Q$ was optimal so is $Q'$.   
If instead we perform this same transformation on an optimal solution restricted to containing $p$, call it $X$, then the same argument implies we produce a set $X'$ such that $w(\CH_L(X'))=cost(X',P)$, $|X'|\leq |X|$, and $\CH(X)\subseteq \CH(X')$. Moreover, because $\CH(X)\subseteq \CH(X')$ and $p\in V(\CH(P))$, crucially we have $p\in \CH_L(X')$. Thus the modified \lemref{bounded} and hence \corref{minkgraph}, where $p$ is included, both hold.

To find the optimal solution to \probref{mink} containing $p$, we now use the same approach as in \thmref{minkgen}. The difference now however, is that we only need to compute single source shortest paths in $G_P^\eps$ rather than APSP, since we can use $p$ as our starting point. Let $S(n)$ be the time to compute single source shortest paths. In an unweighted graph using BFS gives $S(n)=O(|E|+|V|) = O(n^2)$. Thus replacing $A(n)$ with $S(n)$ in the running time statement of  \thmref{minkgen} gives $O(S(n)+n^2\log n) = O(n^2 \log n)$.
Similarly, replacing $A(n)$ with $S(n)$  for \thmref{minepsgen} gives  $O(S(n)\log n +n^2\log n) = O(n^2 \log n)$. 
%
Thus we have the following.

\begin{theorem}\thmlab{kapprox}
 Let $P,\eps$ be an instance of \probref{mink}, with optimal solution $Q$. Then in $O(n^2\log n)$ time one can compute a set $Q'\subseteq P$ such that $cost(Q',P)\leq \eps$ and $|Q'|\leq |Q|+1$. 
 
 Similarly, let $P,k$ be an instance of \probref{mineps}, where the optimal solution $Q$ has cost $\eps$. Then in $O(n^2\log n)$ time one can compute a set $Q'\subseteq P$ such that $cost(Q',P)\leq \eps$ and $|Q'|\leq |Q|+1$.
\end{theorem}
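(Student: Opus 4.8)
The plan is to reduce each problem to a ``rooted'' version and then reuse the machinery of \secref{gencase} almost verbatim, swapping APSP for BFS. First I would fix an arbitrary vertex $p\in V(\CH(P))$. If $Q$ is an optimal solution to the given instance of \probref{mink}, then $Q'=Q\cup\{p\}$ satisfies $\CH(Q)\subseteq\CH(Q')$, hence $cost(Q',P)\leq cost(Q,P)\leq\eps$, while $|Q'|\leq |Q|+1$. So it suffices to show that an optimal solution to \probref{mink} \emph{among subsets required to contain $p$} can be computed in $O(n^2\log n)$ time; such a solution automatically has cost $\leq\eps$ and size $\leq |Q|+1$, and is the desired $Q'$. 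The \probref{mineps} statement will follow by binary searching this rooted decider over edge weights, exactly as \thmref{minepsgen} binary searches \thmref{minkgen}.

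Next I would set up the rooted analogue of \corref{minkgraph}: let $C^*$ be a minimum-cardinality cycle in $G_P$ with $w(C^*)\leq\eps$ that passes through $p$, and argue $C^*$ is optimal among $p$-containing solutions. Part~1) of \lemref{bounded} ($w(C)\geq cost(C,P)$ for every cycle) carries over unchanged. For part~2), I would run the same transformation as in the original proof on an optimal $p$-containing solution $X$: repeatedly take a point lying strictly left of two edges of $\CH_L(\cdot)$ and push it out, producing $X'$ with $w(\CH_L(X'))=cost(X',P)$, $|X'|\leq |X|$, and $\CH(X)\subseteq\CH(X')$; this terminates since the hull strictly grows each step. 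The one new observation needed is that $\CH(X)\subseteq\CH(X')$ together with $p\in V(\CH(P))$ forces $p$ to remain a vertex of $\CH(X')$, i.e. $p\in\CH_L(X')$, so feasibility of the rooted constraint is preserved throughout; everything else in the proof is identical.

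Then I would mirror \thmref{minkgen}: compute all edge weights of $G_P$ in $O(n^2\log n)$ time via \lemref{precompgeneral}, delete edges of weight $>\eps$ to obtain the unweighted directed graph $G_P^\eps$, and find the shortest cycle through $p$. Since the source is now fixed, this only requires single-source shortest paths from $p$ (and, symmetrically, to $p$, or equivalently BFS in $G_P^\eps$ and in its reverse), which costs $S(n)=O(|E|+|V|)=O(n^2)$ instead of $A(n)$. The minimum cycle through $p$ is then read off by scanning, for each $v$ with $(v,p)\in G_P^\eps$, the distance from $p$ to $v$. Total time $O(S(n)+n^2\log n)=O(n^2\log n)$, giving the first statement. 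For \probref{mineps}, the optimal cost is the weight of some edge of $G_P$ (as cycle weights are edge weights); enumerate and sort the $O(n^2)$ edge weights once in $O(n^2\log n)$ time and binary search using the above $O(n^2)$-time rooted decider, for a total of $O(S(n)\log n+n^2\log n)=O(n^2\log n)$.

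The main obstacle—and the only step requiring reasoning not already present in \secref{gencase}—is confirming that the hull-enlarging transformation in part~2) of \lemref{bounded} never removes $p$ from the hull boundary; once that is in hand, the rest is a routine reuse of \lemref{bounded}, \corref{minkgraph}, \lemref{precompgeneral}, and \thmref{minkgen}/\thmref{minepsgen} with APSP replaced by BFS.
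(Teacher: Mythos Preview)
Your proposal is correct and follows essentially the same route as the paper: fix an arbitrary $p\in V(\CH(P))$, observe $Q\cup\{p\}$ witnesses a rooted optimum of size at most $|Q|+1$, extend \lemref{bounded}/\corref{minkgraph} to the rooted setting (the only new point being that $\CH(X)\subseteq\CH(X')$ and $p\in V(\CH(P))$ keep $p$ on $\CH_L(X')$), and then replace APSP by single-source BFS in \thmref{minkgen}/\thmref{minepsgen}. The paper highlights exactly the same ``crucial'' observation about $p$ remaining on the hull boundary as the one nontrivial step.
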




\bibliographystyle{plainurl}
\bibliography{refs}%

%
%

\end{document}